\newtheorem{Theorem}{Theorem}[section]
\newtheorem{Remark}[Theorem]{Remark}
\newtheorem{Definition}[Theorem]{Definition}
\newtheorem{Corollary}[Theorem]{Corollary}
\newtheorem{Example}[Theorem]{Example}
\numberwithin{equation}{section}
\begin{document}

\title{Constacyclic Codes over Finite Fields\footnote{
 E-Mail address: b\_c\_chen@yahoo.com.cn (B. C. Chen),
  yunfan02@yahoo.com.cn (Y. Fan),
  xiaomi1985516@126.com (L. R. Lin),
  h\_w\_liu@yahoo.com.cn (H. Liu).}}

\author{Bocong Chen,~ Yun Fan,~ Liren Lin,~ Hongwei Liu}

\date{\small School of Mathematics and Statistics,
Central China Normal University\\
Wuhan, Hubei, 430079, China}

\maketitle

\begin{abstract} An equivalence relation called isometry is introduced to classify constacyclic codes over a finite field;
the polynomial generators of constacyclic codes of length
$\ell^tp^s$ are characterized, where $p$ is the characteristic of
the finite field and $\ell$ is a prime different from $p$.

\medskip
\textbf{Keywords:} finite field, constacyclic code, isometry, polynomial generator.

\medskip
\textbf{2010 Mathematics Subject Classification:}~ 94B05; 94B15
\end{abstract}

\section{Introduction}

Constacyclic codes constitute a remarkable generalization of cyclic
codes, hence form an important class of linear codes in the coding
theory. And, constacyclic codes also have practical applications as
they can be encoded with shift registers.

In \cite{Bla}, for any positive integer $a$ and any odd integer $n$,
Blackford used the discrete Fourier transform to show that
${\bf Z}_4[X]/\langle X^{2^an}+1 \rangle$ is a principal ideal ring,
where ${\bf Z}_4$ denotes the residue ring of integers modulo $4$,
and to establish a concatenated structure of negacyclic codes
of length $2^an$ over ${\bf Z}_4$.
In \cite{Abu} Abualrub and Oehmke classified the cyclic codes
of length $2^k$ over ${\bf Z}_4$ by their generators.
Generalizing the result of \cite{Abu},
Dougherty and Ling in \cite{Dou} classified
the cyclic codes of length $2^k$ over
the Galois ring ${\rm GR}(4,m)$.

Let $F_q$ be a finite field with $q=p^m$ elements
where $p$ is a prime, and let $\lambda\in F_q^*$
where $F_q^*$ denotes the multiplicative group consisting of
all non-zero elements of~$F_q$.
Any $\lambda$-constacyclic code $C$ of length~$n$ over $F_q$
is identified with an ideal of the quotient algebra
$F_q[X]/\langle X^n-\lambda\rangle$ where
$\langle X^n-\lambda\rangle$ denotes the ideal generated
by $X^n-\lambda$ of the polynomial algebra $F_q[X]$,
hence $C$ is generated by a factor polynomial of $X^n-\lambda$,
called the {\em polynomial generator}
of the $\lambda$-constacyclic code $C$.
In order to obtain all $\lambda$-constacyclic codes
of length $n$ over~$F_q$, we need to determine
all the irreducible factors of $X^n-\lambda$ over $F_q$.
It is remarkable that, though all
irreducible binomials over $F_q$
have been explicitly characterized by Serret early in 1866
(e.g. see \cite[Theorem 3.75]{li} or \cite[Theorem 10.7]{Wan}),
no effective method were found to characterize
the irreducible factors of $X^n-\lambda$ over $F_q$ so far.
It is a challenge to determine explicitly the polynomial
generators of all constacyclic codes over finite fields.

It is well known that $X^n-\lambda$ is a factor of $X^N-1$ for
a suitable integer~$N$, and the irreducible factors
of $X^N-1$ over $F_q$ with $q=p^m$ as above
can be described by the $q$-cyclotomic cosets.
Recently, assuming that $p$ is odd and the order of $\lambda$
in the multiplicative group $F_q^*$ is a power of $2$,
Bakshi and Raka in \cite{Bak} described the polynomial generators
of $\lambda$-constacyclic codes of length $2^t$ over~$F_q$
by means of recognizing the $q$-cyclotomic cosets which are
corresponding to the irreducible factors of $X^{2^t}-\lambda$.
In the same paper \cite{Bak}, Bakshi and Raka determined
the polynomial generators of all the $\lambda$-constacyclic codes
of length $2^tp^s$ over $F_q$, $q = p^m$,
for any nonzero $\lambda$ in $F_q$.
Almost the same time but in another approach,
assuming that $p$ is odd,
Dinh in \cite{Dinh11} determined the polynomial generators
of all constacyclic codes of length $2p^s$ over $F_q$
in a very explicit form: the irreducible factors of the
polynomial generators are all binomials of degree $1$ or $2$.

In this paper, we are concerned with the constacyclic codes
of length $\ell^tp^s$ over $F_q$, where $q=p^m$ as before
and $\ell$ is a prime different from $p$.
We introduce a concept ``isometry'' for the non-zero elements
of $F_q$ to classify constacyclic codes over $F_q$ such that
the constacyclic codes belonging to the same isometry class
have the same distance structures and the same algebraic structures.
Then we characterize in an explicit way the polynomial generators
of constacyclic codes of length $\ell^tp^s$ over $F_q$
according to the isometry classes.
It is notable that, except for the
constacyclic codes which are isometric to cyclic codes,
the irreducible factors of the polynomial generator of
any constacyclic code of length $\ell^tp^s$ over $F_q$
are either all binomials or all trinomials.

The plan of this paper is as follows.
The necessary notations and some known results to be used
are provided in Section 2.
In Section 3, we introduce precisely the concept of isometry,
which is an equivalence relation on~$F_q^*$;
and some necessary and sufficient conditions for
any two elements of $F_q^*$ isometric to each other
are established; as a consequence, the constacyclic codes
isometric to cyclic codes are described.
In Section 4, we classify the constacyclic codes of
length $\ell^tp^s$ over $F_q$ into isometry classes,
characterize explicitly the polynomial generators of
the constacyclic codes of each isometry class, and derive some consequences, including the main result of \cite{Dinh11}.
In Section 5, with the help of the GAP (\cite{GAP}),
the polynomial generators of all constacyclic codes
of length 6 over $F_{2^4}$, all constacyclic codes
of length 175 over $F_{5^2}$ and all constacyclic codes
of length 20 over $F_{5^2}$ are computed.

\section{Preliminaries}
Throughout this paper $F_q$ denotes a finite field  with $q$
elements where $q=p^m$ is a power of a prime $p$.
Let $F_q^*$ denote the multiplicative group of $F_q$
consisting of all non-zero elements of $F_q$; and
for $\beta\in F_q^*$, let $\rm{ord}(\beta)$
denote the order of $\beta$ in the group $F_q^*$;
then $\rm{ord}(\beta)$ is a divisor of $q-1$, and $\beta$
is called a {\em primitive $\rm{ord}(\beta)$-th root of unity}.
It is well-known that $F_q^*$ is a cyclic group of order $q-1$,
i.e. $F_q^*$ is generated by a primitive $(q-1)$-th root
$\xi$ of unity, we denote it by $F_q^*=\langle\xi\rangle$.
For any integer $k$, it is known that
${\rm ord}(\xi^k)=\frac{q-1}{\gcd(k,q-1)}$,
where $\gcd(k,q-1)$ denotes the greatest common divisor
of $k$ and $q-1$.

Assume that $n$ is a positive integer and
$\lambda$ is a non-zero element of $F_q$.
A linear code $C$ of length $n$ over $F_q$ is said to be
{\it $\lambda$-constacyclic} if for any code word
$(c_0,c_1,\cdots, c_{n-1})\in C$ we have that $(\lambda c_{n-1},c_0,c_1,\cdots, c_{n-2})\in C$. We denote by
$F_q[X]$, the polynomial algebra over $F_q$, and denote by
$\langle X^n-\lambda \rangle$, the ideal of $F_q[X]$
generated by $X^n-\lambda$. Any element of the quotient algebra
$F_q[X]/\langle X^n-\lambda\rangle$ is uniquely represented by
a polynomial $a_0+a_1X+\cdots+a_{n-1}X^{n-1}$ of degree less than $n$,
hence is identified with a word $(a_0,a_1,\cdots,a_{n-1})$
of length $n$ over $F_q$; so we have the corresponding
{\em Hamming weight} and the {\em Hamming distance}
on the algebra $F_q[X]/\langle X^n-\lambda\rangle$.

In this way, any $\lambda$-constacyclic code $C$
of length $n$ over $F_q$ is identified with exactly one ideal
of the quotient algebra $F_q[X]/\langle X^n-\lambda\rangle$,
which is generated by a divisor $g(X)$ of $X^n-\lambda$,
and the divisor $g(X)$ is determined by $C$ uniquely up to a scale;
in that case, $g(X)$ is called a {\em polynomial generator}
of $C$ and write it as $C=\langle g(X)\rangle$.
Specifically, the irreducible factorization of $X^n-\lambda$
in $F_q[X]$ determines all $\lambda$-constacyclic
codes of length $n$ over $F_q$.

Note that the $1$-constacyclic codes are just the usual
{\em cyclic codes}, and there is a lot of literature
to deal with the cyclic codes.
In particular, the irreducible factorization of
$X^n-1$ in $F_q[X]$ can be described as follows.
As usual, we adopt the notations:
$k\,|\,n$ means that the integer $k$ divides $n$;
and, for a prime integer~$\ell$,
$\ell^e\Vert n$ means that $\ell^e\,|\,n$ but $\ell^{e+1}\nmid n$.

\begin{Remark}\label{irr-cyclic}\rm
Assume that $n=n'p^s$ with $s\ge 0$ and $p\nmid n'$.
For an integer $r$ with $0\leq r\leq n'-1$,
the {\em $q$-cyclotomic coset of $r$ modulo $n'$} is defined by
$$C_r=\{r\cdot q^j~(\bmod~n')\,|\,j=0,1,\cdots\}.$$
A subset $\{r_1, r_2, \cdots ,r_\rho\}$ of $\{0,1,\cdots ,n'-1\}$ is called a {\em complete set of representatives}
of all $q$-cyclotomic cosets modulo $n'$ if
$C_{r_1}, C_{r_2}, \cdots ,C_{r_\rho}$ are distinct and
$\bigcup_{i=1}^\rho C_{r_i}=\{0,1,\cdots ,n'-1\}$.
Take $\eta$ to be a primitive $n'$-th root of unity
(maybe in an extension of $F_q$), and denote by $M_{\eta}(X)$,
the minimal polynomial of $\eta$ over $F_q$.
It is well-known that (e.g. see \cite[Theorem 4.1.1]{huff}):
\begin{equation}\label{simple-irr-decomposition}
X^{n'}-1= M_{\eta^{r_1}}(X)M_{\eta^{r_2}}(X)
 \cdots M_{\eta^{r_\rho}}(X)
\end{equation}
with
$$M_{\eta^{r_i}}(X)=\prod\limits_{j\in C_{r_i}}(X-\eta^j),
 \qquad i=1,\cdots,\rho,$$
all being irreducible in $F_q[X]$, hence
\begin{equation}\label{irr-decomposition}
X^n-1=(X^{n'}-1)^{p^s}= M_{\eta^{r_1}}(X)^{p^s}
M_{\eta^{r_2}}(X)^{p^s}\cdots M_{\eta^{r_\rho}}(X)^{p^s}
\end{equation}
is the irreducible decomposition of $X^n-1$ in $F_q[X]$.
\end{Remark}

\smallskip
In a very special case the irreducible factorization
of $X^n-\lambda$ in $F_q[X]$ has been characterized precisely,
we quote it as the following remark.

\begin{Remark}\label{irr-trinomial}\rm Assume that
$q\equiv3~(\bmod~4)$ (in particular, $q$ is a power of an odd prime), equivalently, $2\Vert(q-1)$.
Then $X^{2^t}+1$ is factorized into irreducible polynomials
over $F_q$ in \cite[Theorem 1]{Blake}. We should mention that,
though \cite[Theorem 1]{Blake} is proved for a prime $p$
with $p\equiv3~(\bmod~4)$,
one can check in the same way as in \cite{Blake}
that it also holds for the present case when
$q$ is a power of a prime and $q\equiv3~(\bmod~4)$.
We reformulate the result as follows.
Note that $4\,|\,(q+1)$ in the present case,
hence there is an integer $e\ge 2$ such that $2^e\Vert(q+1)$.
Set $H_1=\{0\}$; recursively define
$$\textstyle H_i=\left\{\pm(\frac{h+1}{2})^\frac{q+1}{4}\,\big|\,h\in H_{i-1}\right\},$$
for $i=2,3, \cdots, e-1$; and set
$$\textstyle H_e=\left\{\pm(\frac{h-1}{2})^\frac{q+1}{4}\,\big|\,h\in H_{e-1}\right\}
   =H_{e+1}=H_{e+2}=\cdots.$$
Let $t\ge 1$. Set $b=t$ and $c=0$ if $1\leq t\leq e-1$;
while set $b=e$ and $c=1$ if $t\ge e$.
Then (see \cite[Theorem 1]{Blake} or \cite[Theorem 10.13]{Wan}):
\begin{equation}\label{irr-2-decomposition}
 X^{2^t}+1=\prod\limits_{h\in H_{t}}
 \big(X^{2^{t-b+1}}-2hX^{2^{t-b}}+(-1)^c\big)
\end{equation}
with all the factors in the right hand side being
irreducible over $F_{q}$.
\end{Remark}

\smallskip Return to our general case.
As we mentioned before,
the irreducible non-linear binomials over $F_q$
have been determined by Serret early in 1866
(see \cite[Theorem 3.75]{li} or \cite[Theorem 10.7]{Wan}), we restate it as
a remark for later quotations.

\begin{Remark}\label{irr-binomial}\rm
Assume that $n\ge 2$. For any $a\in F_q^*$ with ${\rm ord}(a)=k$,
the binomial $X^n-a$ is irreducible over $F_q$ if and only if
both the following two conditions are satisfied:

{\rm(i)}\quad
Every prime divisor of $n$ divides $k$,
but does not divide  $(q-1)/k$;

{\rm(ii)}\quad If $4\,|\,n$, then $4\,|\,(q-1)$.
\end{Remark}

\section{Isometries between constacyclic codes}

Let $F_q$ be a finite field of order $q=p^m$
and $F_q^*=\langle\xi\rangle$ as before, where $\xi$ is a
primitive $(q-1)$-th root of unity.
Let $n$ be a positive integer.

Generalizing the usual equivalence between codes,
we consider a kind of equivalences between the
$\lambda$-constacyclic codes and the $\mu$-constacyclic codes
which preserve the algebraic structures of the constacyclic codes.

\begin{Definition}\label{n-equivalence}
Let $\lambda, \mu\in F_q^*$. We say that
an $F_q$-algebra isomorphism
$$
 \varphi:\quad F_q[X]/\langle X^n-\mu\rangle~\longrightarrow~
 F_q[X]/\langle X^n-\lambda\rangle
$$
is an {\em isometry} if it preserves the Hamming distances
on the algebras, i.e.
$$
 d_H\big(\varphi({\bf a}),\varphi({\bf a}')\big)=
 d_H({\bf a},{\bf a}'),\qquad\forall~
 {\bf a},{\bf a}'\in F_q[X]/\langle X^n-\mu\rangle.
$$
And, if there is an isometry between
$F_q[X]/\langle X^n-\lambda\rangle$
and $F_q[X]/\langle X^n-\mu\rangle$, then
we say that $\lambda$ is {\em $n$-isometric}
to $\mu$ in $F_q$, and denote it $\lambda\cong_n\mu$.
\end{Definition}

Obviously, the $n$-isometry ``$\cong_n$''
is an equivalence relation on~$F_q^*$, hence $F_q^*$ is partitioned
into $n$-isometry classes.
If $\lambda\cong_n\mu$, then all the $\lambda$-constacyclic codes
of length $n$ are one to one corresponding to all the
$\mu$-constacyclic codes of length $n$ such that
the corresponding constacyclic codes have the same dimension
and the same distance distribution,
specifically, have the same minimum distance; at that case
we say that, for convenience,
the $\lambda$-constacyclic codes of length $n$
are {\em isometric} to the $\mu$-constacyclic codes of length~$n$.
So, it is enough to study the $n$-isometry classes of constacyclic codes.

\begin{Theorem}\label{n-isometry} For any $\lambda,\mu\in F_q^*$,
the following three statements are equivalent to each other:

{\bf(i)}\quad $\lambda\cong_n\mu$.

{{\bf(ii)}\quad $\langle\lambda,\xi^n\rangle=\langle\mu,\xi^n\rangle$,
where $\langle\lambda,\xi^n\rangle$ denotes the subgroup of $F_q^*$
generated by $\lambda$ and $\xi^n$.

\bf(iii)}\quad There is a positive integer $k<n$ with
$\gcd(k,n)=1$ and an element $a\in F_q^*$ such that
$a^n\lambda=\mu^k$ and the following map
\begin{equation}\label{a-isometry}
\varphi_{a}:\quad F_q[X]/\langle X^n-\mu^k\rangle~\longrightarrow~
 F_q[X]/\langle X^n-\lambda\rangle,
\end{equation}
which maps any element $f(X)+\langle X^n-\mu^k\rangle$ of
$F_q[X]/\langle X^n-\mu^k\rangle$ to
the element $f(aX)+\langle X^n-\lambda\rangle$
of $F_q[X]/ \langle X^n-\lambda\rangle$,
is an isometry.

\noindent
In particular, the number of $n$-isometry classes of $F_q^*$
is equal to the number of positive divisors of $\gcd(n, q-1)$.
\end{Theorem}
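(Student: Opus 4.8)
The plan is to transfer everything into the quotient group $G=F_q^*/\langle\xi^n\rangle$. First I would observe that $\langle\xi^n\rangle=\{x^n\mid x\in F_q^*\}$ is precisely the subgroup of $n$-th powers, so $G$ is cyclic of order $[F_q^*:\langle\xi^n\rangle]=\gcd(n,q-1)$; writing $\bar\beta$ for the image of $\beta\in F_q^*$, condition (ii) reads $\langle\bar\lambda\rangle=\langle\bar\mu\rangle$ in $G$. I would then prove the cycle (i)$\Rightarrow$(ii)$\Rightarrow$(iii)$\Rightarrow$(i) and deduce the final count from the equivalence (i)$\Leftrightarrow$(ii). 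I expect the decisive step to be (i)$\Rightarrow$(ii), where a purely metric-algebraic hypothesis must be converted into a group equality.

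For (i)$\Rightarrow$(ii), I would exploit that an isometry preserves Hamming weight, since $\varphi(0)=0$ and $\mathrm{wt}(x)=d_H(x,0)$. The nonzero weight-$1$ elements of $F_q[X]/\langle X^n-\mu\rangle$ are exactly the monomials $aX^i$ with $a\in F_q^*$ and $0\le i<n$; each is a unit (as $X^{-1}=\mu^{-1}X^{n-1}$), and together they form a multiplicative group $U_\mu$ of order $n(q-1)$, and similarly one has $U_\lambda$. Since $\varphi$ is a weight-preserving algebra isomorphism, its restriction is an injective group homomorphism $U_\mu\to U_\lambda$, hence, by equality of orders, a group isomorphism; in particular $\varphi(X)=aX^j$ for some $a\in F_q^*$ and some $1\le j<n$ (the value $j=0$ is excluded, for then $\varphi(X)$ would be a scalar and the image algebra could not be all of $F_q[X]/\langle X^n-\lambda\rangle$ when $n\ge2$; the case $n=1$ is trivial). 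Applying $\varphi$ to $X^n=\mu$ and using that $\varphi$ fixes $F_q$ gives $\mu=\varphi(X)^n=(aX^j)^n=a^n\lambda^j$, so $\bar\mu=\bar\lambda^{\,j}$ and $\langle\bar\mu\rangle\subseteq\langle\bar\lambda\rangle$; applying the same reasoning to the isometry $\varphi^{-1}$ gives the reverse inclusion, whence (ii).

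For (ii)$\Rightarrow$(iii) I would first settle a number-theoretic point: with $e=\mathrm{ord}(\bar\mu)$, a divisor of $\gcd(n,q-1)$ and hence of $n$, the equality $\langle\bar\lambda\rangle=\langle\bar\mu\rangle$ gives $\bar\lambda=\bar\mu^{\,j}$ with $\gcd(j,e)=1$, and I would produce $k\equiv j\ (\mathrm{mod}\ e)$ with $\gcd(k,n)=1$ and $1\le k<n$ (choose, by the Chinese Remainder Theorem, a residue avoiding each prime of $n$ not dividing $e$, then reduce modulo $n$). Then $\bar\lambda=\bar\mu^{\,k}$, i.e.\ $\lambda\mu^{-k}\in\langle\xi^n\rangle=(F_q^*)^n$, so $\lambda\mu^{-k}=b^{\,n}$ for some $b$, and $a=b^{-1}$ satisfies $a^n\lambda=\mu^k$; the map $\varphi_a$ is then well defined and, sending the basis monomial $X^i$ to $a^iX^i$, is a monomial (diagonal) transformation, hence weight preserving, giving (iii). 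For (iii)$\Rightarrow$(i) I would supply the multiplier isometry $\mu\cong_n\mu^{\,k}$: choosing $k'$ and $\ell$ with $kk'=1+n\ell$ and setting $c=\mu^{-\ell}$, the substitution $X\mapsto cX^{k'}$ defines an $F_q$-algebra isomorphism $F_q[X]/\langle X^n-\mu\rangle\to F_q[X]/\langle X^n-\mu^k\rangle$ (one checks $(cX^{k'})^n=\mu$), and since $\gcd(k',n)=1$ it carries each $X^i$ to a nonzero scalar multiple of $X^{k'i\ (\mathrm{mod}\ n)}$, again a monomial transformation and hence an isometry; composing with $\varphi_a$ yields $\lambda\cong_n\mu$.

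Finally, (i)$\Leftrightarrow$(ii) shows that the isometry class of $\lambda$ both determines and is determined by the cyclic subgroup $\langle\bar\lambda\rangle$ of $G$. Since $G$ is cyclic, every subgroup is of this form for a suitable $\lambda$, so the isometry classes are in bijection with the subgroups of $G$; a cyclic group of order $d=\gcd(n,q-1)$ has exactly one subgroup per divisor of $d$, which gives the asserted number of classes. The heart of the argument is the step (i)$\Rightarrow$(ii): the insight that an isometry must permute the weight-$1$ units, thereby pinning $\varphi(X)$ down to a monomial $aX^j$ and forcing the relation $\mu=a^n\lambda^j$, is what extracts the group-theoretic condition from the bare hypothesis of distance preservation.
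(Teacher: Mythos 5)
Your proposal is correct and follows essentially the same route as the paper: pinning down $\varphi(X)=aX^j$ via weight preservation, converting (ii) into a statement about subgroups of the cyclic quotient $F_q^*/\langle\xi^n\rangle$, lifting the exponent through the surjection ${\bf Z}_n^*\to{\bf Z}_e^*$, and realizing the isometries by monomial substitutions. Your two small variations --- using $\varphi^{-1}$ for the reverse inclusion in (i)$\Rightarrow$(ii) instead of a B\'ezout manipulation, and factoring the map in (iii)$\Rightarrow$(i) through $F_q[X]/\langle X^n-\mu^k\rangle$ --- are only cosmetic streamlinings of the same argument.
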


\begin{proof}  (i) $\Rightarrow$ (ii).~ By (i) we have
an isometry $\varphi$ between the algebras:
$$\varphi:\quad F_q[X]/\langle X^n-\mu\rangle~
 \longrightarrow~F_q[X]/\langle X^n-\lambda\rangle.$$
Since $\varphi$ preserves the Hamming distance,
it must map $X$ of weight~$1$ of the algebra
$F_q[X]/\langle X^n-\mu\rangle$ to an element of the algebra
$F_q[X]/\langle X^n-\lambda\rangle$ of weight~$1$,
so there is an element $b\in F_q^*$ and an integer
$j$ with $0\le j<n$ such that
\begin{equation}\label{j-iso}
 \varphi(X)=bX^j\,.
\end{equation}
Consider $\varphi(X^i)=(bX^j)^i=b^iX^{ji}~(\bmod~X^n-\lambda)$~
for $i=0,1,\cdots,n-1$;
since $\varphi$ is a bijection,
we see that any index $e$ with $0\le e\le n-1$ must appear
in the following sequence:
$$
 ji~(\bmod~n),\qquad i=0,1,\cdots,n-1;
$$
hence $j~(\bmod~n)$ must be invertible,
i.e. $0<j<n$ and $\gcd(j,n)=1$.
Note that $X^n=\lambda~(\bmod~X^n-\lambda)$; further,
note that $\varphi$ is an algebra isomorphism
and $\mu\in F_q$, we see that $\varphi(\mu)=\mu$,
and can make the following calculation in
$F_q[X]/\langle X^n-\lambda\rangle$
(or equivalently, modulo $X^n-\lambda$):
\begin{equation}\label{eq2}
\mu=\varphi(\mu)=\varphi(X^n)=\varphi(X)^n
=(bX^j)^n=b^nX^{jn}=b^n\lambda^j;
\end{equation}
i.e. as elements of $F_q$ we have $\mu=\lambda^jb^n$.
Obviously, $\langle\xi^n\rangle=\{a^n\mid a\in F_q^*\}$.
We have $\mu\in\langle\lambda,\xi^n\rangle$, and hence
$\langle\mu,\xi^n\rangle\subseteq\langle\lambda,\xi^n\rangle$.
On the other hand, since $\gcd(j,n)=1$,
there are integers $k,h$ such that $jk+nh=1$; so
$$
\mu^{k}=\lambda^{jk}b^{nk}=\lambda^{jk+nh}\lambda^{-nh}b^{nk}
=\lambda(\lambda^{-h}b^{k})^n;
$$
i.e.
$\lambda=\mu^{k}(\lambda^{h}b^{-k})^n\in\langle\mu,\xi^n\rangle$;
and we have that
$\langle\lambda,\xi^n\rangle\subseteq\langle\mu,\xi^n\rangle$.
Thus, we get the desired conclusion:
$\langle\lambda,\xi^n\rangle=\langle\mu,\xi^n\rangle$.

\smallskip(ii) $\Rightarrow$ (iii).~ Denote $d=\gcd(n,q-1)$.
Then the subgroup $\langle\xi^n\rangle=\langle\xi^d\rangle$,
and the quotient group
$$F_q^*/\langle\xi^n\rangle=F_q^*/\langle\xi^d\rangle
  =\langle\xi\rangle/\langle\xi^d\rangle$$
is a cyclic group of order $d$.
From the statement (ii) we have that
$$\langle\lambda,\xi^n\rangle/\langle\xi^d\rangle
  =\langle\mu,\xi^n\rangle/\langle\xi^d\rangle;$$
which implies that, in the cyclic group
$F_q^*/\langle\xi^d\rangle$ of order $d$, $\lambda$ and $\mu$ generate
the one and the same subgroup, in particular,
they have the same order in the quotient group $F_q^*/\langle\xi^d\rangle$.
Thus there are integers $k',h'$ such that
$\lambda=\mu^{k'}\xi^{dh'}$ and $\gcd(k',d)=1$.
Since $d\mid n$, it is known that the natural map
$$
{\bf Z}_n^*~\longrightarrow~{\bf Z}_d^*,\quad
 z~(\bmod~n)~\longmapsto~z~(\bmod~d),
$$
is a surjective homomorphism,
where ${\bf Z}_n^*$ denotes the multiplicative group
consisting of all reduced residue classes modulo $n$.
We can take a positive integer $k<n$ with $\gcd(k,n)=1$ and
$k\equiv k'~(\bmod~d)$. Then there is an integer~$h$ such that
$k'=k+dh$. So
$$
\lambda=\mu^{k'}\xi^{dh'}=\mu^{k+dh}\xi^{dh'}=\mu^k(\mu^h\xi^{h'})^d.
$$
As $(\mu^h\xi^{h'})^d\in\langle\xi^d\rangle
=\langle\xi^n\rangle$, we have an $a\in F_q^*$ such that
$(\mu^h\xi^{h'})^d=a^{-n}$.
In a word, we have an integer $k$ coprime
to $n$ and an $a\in F_q^*$ such that $a^n\lambda=\mu^k$.
Now we define an algebra homomorphism:
$$
\hat\varphi_{a}:\quad F_q[X]~\longrightarrow~
 F_q[X]/\langle X^n-\lambda\rangle,
$$
by mapping $f(X)\in F_q[X]$ to
$\hat\varphi_a\big(f(X)\big)=f(aX)~(\bmod~X^n-\lambda)$;
since $a$ is non-zero, $\hat\varphi_{a}$ is obviously surjective.
Noting that $X^n=\lambda~(\bmod~X^n-\lambda)$, we have
$$
\hat\varphi_a(X^n-\mu^k)=(aX)^n-\mu^k=a^n X^n-\mu^k
  = a^n\lambda-\mu^k = 0 \pmod{X^n-\lambda}.
$$
So the surjective algebra homomorphism $\hat\varphi_a$
induces an algebra isomorphism
\begin{equation*}
\varphi_{a}:\quad F_q[X]/\langle X^n-\mu^k\rangle~\longrightarrow~
 F_q[X]/\langle X^n-\lambda\rangle,
\end{equation*}
which maps any element $f(X)+\langle X^n-\mu^k\rangle$ of
$F_q[X]/\langle X^n-\mu^k\rangle$ to
the element $f(aX)+\langle X^n-\lambda\rangle$
of $F_q[X]/ \langle X^n-\lambda\rangle$;
since $\varphi_a$ maps any element $X^i$ of weight $1$
to an element $a^iX^i$ of weight $1$,
the algebra isomorphism $\varphi_a$ preserves Hamming distances
of the algebras. we are done for the statement (iii).

\smallskip{(iii)} $\Rightarrow$ {(i)}.~
Since the map (\ref{a-isometry}) in the statement (iii)
is an algebra isomorphism,
we have that
$$0=\varphi_{a}(X^n-\mu^k)=(aX)^n-\mu^k
 =a^n\lambda-\mu^k \pmod{X^n-\lambda};$$
that is, $\lambda a^n=\mu^k$.
By (iii) it is assumed that $\gcd(k,n)=1$, i.e.
there are integers $j,h$ such that $kj+nh=1$,
which also implies that $\gcd(j,n)=1$; so
$$
\mu=\mu^{kj+nh}=(\mu^{k})^{j}\mu^{nh}
  =(\lambda a^n)^{j}\mu^{hn}=\lambda^j(a^j\mu^h)^n.
$$
Set $b=a^j\mu^h$, then $b\in F_q^*$ and $b^n\lambda^j=\mu$.
Since $F_q[X]$ is a free $F_q$-algebra with $X$ as a free generator,
by mapping $X$ to $bX^j$, we can define an algebra homomorphism:
$$
\hat\varphi:\quad F_q[X]~\longrightarrow~
 F_q[X]/\langle X^n-\lambda\rangle,
$$
which maps any $f(X)\in F_q[X]$ to
$\hat\varphi\big(f(X)\big)=f(bX^j)~(\bmod~X^n-\lambda)$.
Since $j$ is coprime to $n$, the following
$$
 \hat\varphi(X^i)=b^iX^{ji}\pmod{X^n-\lambda},\qquad
 i=0,1,\cdots,n-1,
$$
form a basis of the algebra $F_q[X]/\langle X^n-\lambda\rangle$;
so $\hat\varphi$ is a surjective algebra homomorphism.
Further, we have
$$
\hat\varphi(X^n-\mu)=(bX^j)^n-\mu=b^n X^{nj}-\mu
  = b^n\lambda^j-\mu = 0 \pmod{X^n-\lambda}.
$$
Thus the surjective algebra homomorphism $\hat\varphi$
induces an algebra isomorphism:
\begin{equation*}
\varphi:\quad F_q[X]/\langle X^n-\mu\rangle~\longrightarrow~
 F_q[X]/\langle X^n-\lambda\rangle,
\end{equation*}
which maps any element $f(X)+\langle X^n-\mu\rangle$ of
$F_q[X]/\langle X^n-\mu\rangle$ to
the element $f(bX^j)+\langle X^n-\lambda\rangle$
of $F_q[X]/ \langle X^n-\lambda\rangle$;
in particular, $\varphi$ maps any element $X^i$ of weight $1$
to an element $b^iX^{ji}~(\bmod~X^n-\lambda)$ of weight $1$,
hence $\varphi$ preserves the Hamming distances.
That is, (i) holds.

\smallskip
Finally, by the equivalence of (i) and (ii),
the number of the $n$-isometry classes of $F_q^*$
is equal to the number of the subgroups of the
quotient group $F_q^*/\langle\xi^d\rangle$
where $d=\gcd(n,q-1)$. The quotient $F_q^*/\langle\xi^d\rangle$
is a cyclic group of order~$d$, so, for any divisor $d'\,|\,d$
it has a unique subgroup of order $d'$.
Then the number of the subgroups of $F_q^*/\langle\xi^d\rangle$
is equal to the number of the positive divisors of~$d$.
In conclusion, the number of the $n$-isometry classes of $F_q^*$
is equal to the number of the positive divisors of $\gcd(n,q-1)$.
\end{proof}

\begin{Remark}\label{rem-on-isometry}\rm
Though the statement (i) of Theorem \ref{n-isometry} states that
there is an isometry
$\varphi:F_q[X]/\langle X^n-\mu\rangle\to
F_q[X]/\langle X^n-\lambda\rangle$,
the statement (iii) of Theorem~\ref{n-isometry} exhibits
a specific isometry $\varphi_a$ such that $\varphi_a(X)=aX$,
which outperforms $\varphi$ in (\ref{j-iso}) and provides
an easy way to connect the polynomial generators of
the $\lambda$-constacyclic codes with those of the
$\mu^k$-constacyclic codes.
\end{Remark}

In particular, taking $\mu=1$, we see that
$\lambda\cong_n 1$ implies that there is an isometry
$\varphi_a: F_q[X]/\langle X^n-1\rangle\to
F_q[X]/\langle X^n-\lambda\rangle$ such that $\varphi(X)=aX$.
Thus for the constacyclic codes $n$-isometric
to the cyclic codes, we have the following consequence
which is closely related to \cite[Lemma 3.1]{Hug}.

\begin{Corollary}\label{thm-cyclic}
Let $n$ be a positive integer, and $\lambda\in F_q^*$.
The $\lambda$-constacyclic codes of length $n$
are isometric to the cyclic codes of length $n$ if and only if
$a^n\lambda=1$ for an element $a\in F_q^*$; further,
in that case the map
\begin{equation}\label{cyclic-isometry}
\varphi_{a}:\quad F_q[X]/\langle X^n-1\rangle~\longrightarrow~
  F_q[X]/\langle X^n-\lambda \rangle,
\end{equation}
which maps $f(X)$ to $f(aX)$, is an isometry, and
\begin{equation}\label{lambda-irr-decomp}
X^n-\lambda = \lambda\cdot M_{\eta^{r_1}}(aX)^{p^s}
 M_{\eta^{r_2}}(aX)^{p^s}\cdots M_{\eta^{r_\rho}}(aX)^{p^s}
\end{equation}
is an irreducible factorization of $X^n-\lambda$ in $F_q[X]$,
where $n=n'p^s$ with $s\ge 0$ and $p\nmid n'$,
$M_{\eta^i}(X)$ and $\{r_1,\cdots,r_\rho\}$
are defined in the formula (\ref{irr-decomposition});
in particular, any $\lambda$-constacyclic code $C$
has a polynomial generator as follows:
\begin{equation}\label{cyclic-generator}
\prod_{i=1}^{\rho} M_{\eta^{r_i}}(aX)^{e_i},\qquad
 0\le e_i\le p^s,~\forall~i=1,\cdots,\rho.
\end{equation}
\end{Corollary}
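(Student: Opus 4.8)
The plan is to read everything off Theorem~\ref{n-isometry} specialized to $\mu=1$. For the equivalence, I would observe that the phrase ``the $\lambda$-constacyclic codes of length $n$ are isometric to the cyclic codes'' means precisely $\lambda\cong_n1$, i.e. statement (i) of Theorem~\ref{n-isometry} with $\mu=1$. Applying the equivalence (i)$\Leftrightarrow$(iii) of that theorem and noting that $\mu^k=1^k=1$ for every $k$, the condition in (iii) collapses to the existence of an $a\in F_q^*$ with $a^n\lambda=1$; moreover (iii) simultaneously supplies the isometry $\varphi_a$ of \eqref{cyclic-isometry} sending $f(X)$ to $f(aX)$. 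This disposes of the ``if and only if'' assertion and of the claim that $\varphi_a$ is an isometry in one stroke.

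For the factorization \eqref{lambda-irr-decomp}, I would start from the irreducible decomposition \eqref{irr-decomposition} of $X^n-1$ in Remark~\ref{irr-cyclic} and apply the substitution $X\mapsto aX$, which is the $F_q$-algebra automorphism of $F_q[X]$ underlying $\varphi_a$ (its inverse being $X\mapsto a^{-1}X$). Substituting gives $(aX)^n-1=\prod_{i=1}^\rho M_{\eta^{r_i}}(aX)^{p^s}$; since $a^n\lambda=1$ forces $a^n=\lambda^{-1}$, the left side equals $\lambda^{-1}X^n-1$, and multiplying through by $\lambda$ yields exactly $X^n-\lambda=\lambda\prod_{i=1}^\rho M_{\eta^{r_i}}(aX)^{p^s}$. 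A short leading-coefficient check confirms consistency: each factor $M_{\eta^{r_i}}(aX)$ has leading coefficient $a^{\deg M_{\eta^{r_i}}}$, whose product over all factors is $a^n$, which cancels against the prefactor $\lambda=a^{-n}$ to leave $X^n-\lambda$ monic.

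To see that \eqref{lambda-irr-decomp} is genuinely an irreducible factorization, I would invoke that the automorphism $X\mapsto aX$ preserves both irreducibility and distinctness: since each $M_{\eta^{r_i}}(X)$ is irreducible over $F_q$ and the $M_{\eta^{r_i}}(X)$ are pairwise coprime by Remark~\ref{irr-cyclic}, the same holds for the polynomials $M_{\eta^{r_i}}(aX)$. Hence the right-hand side of \eqref{lambda-irr-decomp} exhibits $\rho$ pairwise distinct irreducible polynomials, each raised to the power $p^s$, up to the scalar $\lambda$.

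Finally, the description \eqref{cyclic-generator} of the polynomial generators is immediate from unique factorization in $F_q[X]$: a polynomial generator of a $\lambda$-constacyclic code is a divisor of $X^n-\lambda$, determined up to a scalar, and by \eqref{lambda-irr-decomp} every such divisor has the form $\prod_{i=1}^\rho M_{\eta^{r_i}}(aX)^{e_i}$ with $0\le e_i\le p^s$. The only point requiring genuine care is the scalar bookkeeping that produces the prefactor $\lambda$ in \eqref{lambda-irr-decomp}; everything else is a direct transport of Remark~\ref{irr-cyclic} along the isometry of Theorem~\ref{n-isometry}, so I anticipate no real obstacle beyond this routine check.
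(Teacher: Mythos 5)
Your proposal is correct and follows essentially the same route as the paper: specialize Theorem~\ref{n-isometry} to $\mu=1$ to get the equivalence and the isometry $\varphi_a$ (the paper reads the ``if and only if'' off statement (ii) as $\lambda\in\langle\xi^n\rangle$ rather than off statement (iii), but this is the same theorem and the same substance), then transport the factorization \eqref{irr-decomposition} along $X\mapsto aX$ using $a^n=\lambda^{-1}$. The scalar bookkeeping and the observation that the substitution preserves irreducibility match the paper's argument exactly.
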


\begin{proof}
By Theorem \ref{n-isometry}, $\lambda\cong_n 1$ if and only if
$\langle\lambda,\xi^n\rangle=\langle 1,\xi^n\rangle
=\langle\xi^n\rangle$; i.e.
$\lambda\cong_n 1$ if and only if $\lambda\in\langle\xi^n\rangle$.
However, $\langle\xi^n\rangle=\{a^n\mid a\in F_q^*\}$;
so $\lambda\cong_n 1$ if and only if
$\lambda=b^n$ for an element $b\in F_q^*$.

Assume that it is the case, i.e. $a^{n}\lambda=1$.
By the statement (iii) of Theorem \ref{n-isometry},
the map (\ref{cyclic-isometry}) is an isometry between the algebras.
And, as in the formula~(\ref{irr-decomposition}),
we have the irreducible decomposition of $X^n-1$ in $F_q[X]$:
$$X^n-1 = M_{\eta^{r_1}}(X)^{p^s}M_{\eta^{r_2}}(X)^{p^s}
 \cdots M_{\eta^{r_\rho}}(X)^{p^s};$$
hence the following is an irreducible decomposition
of $(aX)^n-1$ in $F_q[X]$:
$$(aX)^n-1 = M_{\eta^{r_1}}(aX)^{p^s}M_{\eta^{r_2}}(aX)^{p^s}
 \cdots M_{\eta^{r_\rho}}(aX)^{p^s}.$$
However, since $a^{n}=\lambda^{-1}$,
we have that $(aX)^n=a^{n}X^n=\lambda^{-1}X^n$;
thus we get the irreducible decomposition of
$X^n-\lambda$ in $F_q[X]$ in the formula (\ref{lambda-irr-decomp}).
Finally, the polynomial generator of any $\lambda$-constacyclic
code is a divisor of $X^n-\lambda$, hence
has the form in (\ref{cyclic-generator}).
\end{proof}

\begin{Corollary}\label{cor3-1}
If $n$ is a positive integer coprime to $q-1$, then
there is only one $n$-isometry class in $F_q^*$; in particular,
for any $\lambda\in F_q^*$
the $\lambda$-constacyclic codes of length~$n$
are isometric to the cyclic codes of length~$n$,
i.e. $a^n\lambda=1$ for an $a\in F_q^*$ and all the
(\ref{cyclic-isometry}), (\ref{lambda-irr-decomp}) and
(\ref{cyclic-generator}) hold.
\end{Corollary}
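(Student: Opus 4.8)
The plan is to obtain the corollary as an immediate specialization of Theorem \ref{n-isometry} and Corollary \ref{thm-cyclic}; the hypothesis $\gcd(n,q-1)=1$ does all the work by collapsing the whole classification onto a single isometry class.

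First I would show that $F_q^*$ has only one $n$-isometry class and that $\lambda\cong_n 1$ for every $\lambda\in F_q^*$. The shortest route is the final assertion of Theorem \ref{n-isometry}: the number of $n$-isometry classes equals the number of positive divisors of $\gcd(n,q-1)$, which is $1$ when $\gcd(n,q-1)=1$. Since the identity $1\in F_q^*$ lies in this unique class, every $\lambda$ satisfies $\lambda\cong_n 1$. Equivalently, and more concretely, using ${\rm ord}(\xi^n)=\frac{q-1}{\gcd(n,q-1)}$ from Section 2 the hypothesis gives ${\rm ord}(\xi^n)=q-1$, hence $\langle\xi^n\rangle=F_q^*$; because $\langle\xi^n\rangle=\{a^n\mid a\in F_q^*\}$, every $\lambda$ is an $n$-th power, so $\lambda\in\langle\xi^n\rangle=\langle 1,\xi^n\rangle$, which by the equivalence (i)$\Leftrightarrow$(ii) of Theorem \ref{n-isometry} is exactly $\lambda\cong_n 1$.

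With $\lambda\cong_n 1$ established, I would finish by quoting Corollary \ref{thm-cyclic} verbatim. That corollary asserts that the $\lambda$-constacyclic codes of length $n$ are isometric to the cyclic codes of length $n$ if and only if $a^n\lambda=1$ for some $a\in F_q^*$, and that in that case the isometry (\ref{cyclic-isometry}), the factorization (\ref{lambda-irr-decomp}), and the generator form (\ref{cyclic-generator}) all hold. The identity $\langle\xi^n\rangle=F_q^*$ supplies such an $a$ (choose $a$ with $a^{-n}=\lambda$), so all three conclusions transfer at once. The only real content is the order computation ${\rm ord}(\xi^n)=q-1$; there is no substantive obstacle here, the corollary being a boundary case of the preceding theory.
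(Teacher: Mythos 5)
Your proposal is correct and follows essentially the same route as the paper: both count the $n$-isometry classes via the divisor formula in Theorem \ref{n-isometry}, both observe that $\gcd(n,q-1)=1$ forces every $\lambda$ to be an $n$-th power (the paper notes $a\mapsto a^n$ is an automorphism of $F_q^*$, you compute ${\rm ord}(\xi^n)=q-1$ --- the same fact), and both then invoke Corollary \ref{thm-cyclic}. Your write-up merely fills in the details that the paper leaves as "obtained immediately."
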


\begin{proof}
Since $\gcd(n,q-1)=1$, the conclusion is obtained immediately.
It is an automorphism of the group $F_q^*$
which maps any $a\in F_q^*$ to $a^n\in F_q^*$;
thus there is a $b\in F_q^*$ such that $\lambda=b^n$.
\end{proof}

Let $n=n'p^s$ as in Corollary \ref{thm-cyclic}.
If $n'=1$, then $n=p^s$ is coprime to $q-1$ and
$X^{p^s}-1=(X-1)^{p^s}$, and we get the following result
at once.

\begin{Corollary}\label{cor4-2}
For any $\lambda\in F_q^*$ the $\lambda$-constacyclic codes
of length $p^s$ are isometric to the cyclic codes of length $p^s$;
in particular, there is an $a\in F_q^*$ such that
$a^{p^s}\lambda=1$ and $X^{p^s}-\lambda=\lambda(aX-1)^{p^s}$ is an
irreducible factorization in $F_q[X]$;
in particular, any $\lambda$-constacyclic code $C$ of length $p^s$
has a polynomial generator $(X-a^{-1})^i$ with $0\le i\le p^s$.
\qed
\end{Corollary}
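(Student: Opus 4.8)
The plan is to specialize the general results already established, since $n=p^s$ is an extreme case of the decomposition $n=n'p^s$ with $p\nmid n'$, namely the case $n'=1$. First I would observe that $\gcd(p^s,q-1)=1$: indeed $q-1=p^m-1$ is not divisible by $p$, hence it is coprime to any power of $p$. This lets me invoke Corollary \ref{cor3-1} directly with $n=p^s$, which yields at once that $F_q^*$ has a single $p^s$-isometry class, that every $\lambda$-constacyclic code of length $p^s$ is isometric to a cyclic code of length $p^s$, and that there exists $a\in F_q^*$ with $a^{p^s}\lambda=1$, together with the validity of the formulas (\ref{cyclic-isometry}), (\ref{lambda-irr-decomp}) and (\ref{cyclic-generator}).

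Next I would pin down the cyclotomic data for $n'=1$. Since $n'=1$, the only residue is $r=0$ and the sole $q$-cyclotomic coset modulo $1$ is $C_0=\{0\}$, so $\rho=1$ and $r_1=0$; moreover a primitive $1$-st root of unity is $\eta=1$, whose minimal polynomial over $F_q$ is $M_1(X)=X-1$. Substituting this into the irreducible decomposition (\ref{lambda-irr-decomp}) collapses the product to a single factor and gives
\begin{equation*}
X^{p^s}-\lambda=\lambda\,(aX-1)^{p^s}.
\end{equation*}
I would confirm this identity independently using the Frobenius (freshman's dream) identity in characteristic $p$: $(aX-1)^{p^s}=a^{p^s}X^{p^s}-1$, whence $\lambda(aX-1)^{p^s}=\lambda a^{p^s}X^{p^s}-\lambda=X^{p^s}-\lambda$ because $a^{p^s}\lambda=1$. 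As $X-1$ is irreducible, this is genuinely the irreducible factorization.

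Finally, for the polynomial generators, I would read off the general form (\ref{cyclic-generator}), which here reduces to $(aX-1)^i$ with $0\le i\le p^s$. Since $aX-1=a(X-a^{-1})$ and polynomial generators are determined only up to a nonzero scalar, I would replace $(aX-1)^i$ by its monic associate $(X-a^{-1})^i$, giving exactly the asserted generator. I do not anticipate a serious obstacle here, as the whole argument is a specialization of Corollary \ref{thm-cyclic} and Corollary \ref{cor3-1}; the only points deserving care are the bookkeeping that $n'=1$ forces $\eta=1$ and $M_1(X)=X-1$, and the normalization of the generator from $(aX-1)^i$ to the monic form $(X-a^{-1})^i$.
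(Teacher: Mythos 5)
Your proposal is correct and follows essentially the same route as the paper: it specializes Corollary \ref{cor3-1} (equivalently Corollary \ref{thm-cyclic} with $n'=1$, $\rho=1$, $\eta=1$, $M_1(X)=X-1$) using $\gcd(p^s,q-1)=1$ and $X^{p^s}-1=(X-1)^{p^s}$, which is exactly how the paper obtains the result. The extra Frobenius verification and the monic normalization of the generator are sound and only make the bookkeeping more explicit.
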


\begin{Remark}\label{rem1}\rm
Taking $\lambda=-1$, Corollary \ref{cor4-2} implies that
negacyclic codes of length $p^s$ are isometric to
cyclic codes of length $p^s$.
This generalizes \cite[Theorem 3.3]{Dinh10}
which showed that, in our terminology,
$\lambda$-constacyclic codes of length $p^s$
over $F_{p^m}$ are isometric to the negacyclic codes of
length $p^s$ over $F_{p^m}$.

\end{Remark}

\section{Constacyclic codes of length $\ell^tp^s$}

Let $F_q$ be a finite field of order $q=p^m$ and
$F_q^*=\langle\xi\rangle$ be generated by a primitive
$(q-1)$-th root $\xi$ of unity as before.

In this section, we consider constacyclic codes
of length $\ell^tp^s$ over $F_q$,
where~$\ell$ is a prime integer different from $p$ and
$s$, $t$ are non-negative integers.
We will show that
any $\lambda$-constacyclic code of length $\ell^tp^s$
with $\lambda\not\cong_{\ell^tp^s}1$
has a polynomial generator with irreducible factors all being
binomials of degrees equal to powers of the prime $\ell$
except for the case when $\ell=2$, $t\ge 2$ and $2\Vert(q-1)$;
and in the exceptional case the polynomial generator
with irreducible factors all being trinomials
corresponding to the factorization (\ref{irr-2-decomposition}).

As we did in Remark \ref{irr-cyclic},
take a complete set $\{r_1,\cdots,r_\rho\}$ of representatives
of $q$-cyclotomic cosets modulo $\ell^t$;
take a primitive $\ell^t$-th root $\eta$ of unity
(maybe in an extension of $F_q$), and denote $M_{\eta}(X)$
the minimal polynomial of $\eta$ over $F_q$;
by the formula (\ref{irr-decomposition}),
\begin{equation}\label{lp-irr-decomp}
X^{\ell^tp^s}-1=(X^{\ell^t}-1)^{p^s}= M_{\eta^{r_1}}(X)^{p^s}
M_{\eta^{r_2}}(X)^{p^s}\cdots M_{\eta^{r_\rho}}(X)^{p^s}
\end{equation}
is the irreducible factorization of $X^{\ell^tp^s}-1$ in $F_q[X]$.
Further, assume that
\begin{equation}\label{u-zeta-v}
\ell^u\Vert(q-1)\,,\qquad\zeta=\xi^{\frac{q-1}{\ell^u}}\,,\qquad v=\min\{t,u\}.
\end{equation}

\begin{Theorem}\label{thm4-1}
With notations as above, for any $\lambda\in F_q^*$
there is an index~$j$ with $0\le j\le v$ such that
$\lambda\cong_{\ell^tp^s}\zeta^{\ell^j}$
and one of the following two cases holds:
\begin{itemize}
\item[{\bf(i)}]  $j=v$, then $\lambda\cong_{\ell^tp^s}1$,
$a^{\ell^tp^s}\lambda=1$ for an $a\in F_q^*$ and
$X^{\ell^tp^s}-\lambda=\lambda\cdot
 \prod_{i=1}^{\rho}M_{\eta^{r_i}}(aX)^{p^s}$
with $\{r_1,\cdots,r_\rho\}$ and $M_{\eta^{r_i}}(X)$'s
defined in (\ref{lp-irr-decomp}).
\item[{\bf(ii)}] $0\le j\le v-1$, then
$a^{\ell^tp^s}\lambda=\zeta^{k\ell^j}$ for an $a\in F_q^*$
and a positive integer~$k$ coprime to $\ell^tp^s$;
there are two subcases:
\begin{itemize}
\item[{\bf(ii.a)}]
 if $\ell=2$, $t\geq 2$ and $2\Vert(q-1)$,
 then $j=0$, $a^{\ell^tp^s}\lambda=-1$
 and, setting $H_t$, $b$ and $c$ to be as in
 Remark~\ref{irr-trinomial}, we have that
\begin{equation}\label{lambda-2p-decomp}
X^{2^tp^s}-\lambda=(-\lambda)\cdot\prod\limits_{h\in H_{t}}
\big(a^{2^{t-b+1}}X^{2^{t-b+1}}-2a^{2^{t-b}}hX^{2^{t-b}}+(-1)^c\big)^{p^s}
\end{equation}
with all the factors in the right hand side being
irreducible over $F_{q}$;
\item[{\bf(ii.b)}]
otherwise, taking an integer $s'$ with $0\le s'<m$
and $s'\equiv s~(\bmod~m)$, we have that
\begin{equation}\label{lambda-lp-decomp}
X^{\ell^tp^s}-\lambda=\prod_{i=0}^{\ell^j-1}
 \left(X^{\ell^{t-j}}-a^{-\ell^{t-j}}\zeta^{i\ell^{u-j}+kp^{m-s'}}\right)^{p^s}
\end{equation}
with all the factors in the right hand side being
irreducible over $F_{q}$.
\end{itemize}\end{itemize}
\end{Theorem}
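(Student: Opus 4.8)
The plan is to reduce everything to the isometry classification of Theorem \ref{n-isometry} and then read off the irreducible factorizations from the known factorizations of binomials (Remark \ref{irr-binomial}) and of $X^{2^t}+1$ (Remark \ref{irr-trinomial}). First I would pin down the $n$-isometry classes of $F_q^*$ for $n=\ell^tp^s$. Since $\ell\neq p$ and $q=p^m$, we have $\gcd(p^s,q-1)=1$, so $d:=\gcd(\ell^tp^s,q-1)=\ell^{\min\{t,u\}}=\ell^v$. By Theorem \ref{n-isometry} the classes correspond bijectively to the subgroups of the cyclic quotient $Q=F_q^*/\langle\xi^{\ell^v}\rangle$ of order $\ell^v$, indexed by $j=0,1,\dots,v$. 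Writing $q-1=\ell^uM$ with $\gcd(M,\ell)=1$, so that $\zeta=\xi^M$ and $\zeta^{\ell^j}=\xi^{\ell^jM}$, and identifying $Q\cong{\bf Z}/\ell^v$ via $\xi\mapsto 1$, I would compute that the image of $\zeta^{\ell^j}$ has order $\ell^{v-j}$, hence generates the unique subgroup of that order. As $j$ runs over $0,\dots,v$ these exhaust all subgroups, so each $\lambda$ is $n$-isometric to exactly one $\zeta^{\ell^j}$, giving the index $j$ in the statement.

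For $j=v$ the image of $\zeta^{\ell^v}$ in $Q$ is trivial, i.e. $\zeta^{\ell^v}\in\langle\xi^n\rangle=\{a^n\mid a\in F_q^*\}$, so $\lambda\cong_n\zeta^{\ell^v}\cong_n 1$ and case (i) is exactly Corollary \ref{thm-cyclic} applied to $\lambda$. For $0\le j\le v-1$ I would invoke the explicit isometry of Theorem \ref{n-isometry}(iii): there are $a\in F_q^*$ and $k$ coprime to $n$ with $a^n\lambda=\zeta^{k\ell^j}$, and the substitution $X\mapsto aX$ converts a factorization of $X^n-\zeta^{k\ell^j}$ into one of $X^n-\lambda$ through $X^n-\lambda=a^{-n}\big((aX)^n-\zeta^{k\ell^j}\big)$. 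Thus it suffices to factor $X^{\ell^tp^s}-\zeta^{k\ell^j}=(X^{\ell^t}-c)^{p^s}$, where $c$ is the $p^s$-th root of $\zeta^{k\ell^j}$ in $F_q$, namely $c=\zeta^{k\ell^jp^{m-s'}}$ with $s'\equiv s\pmod m$ and $0\le s'<m$. Because $j<u$, the $\ell^j$-th roots of unity form $\langle\zeta^{\ell^{u-j}}\rangle\subseteq F_q^*$, so the $\ell^j$ distinct $\ell^j$-th roots of $c$ are $\delta_i=\zeta^{i\ell^{u-j}+kp^{m-s'}}$ for $i=0,\dots,\ell^j-1$, whence $X^{\ell^t}-c=\prod_{i=0}^{\ell^j-1}(X^{\ell^{t-j}}-\delta_i)$.

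The decisive step is to decide when the binomials $X^{\ell^{t-j}}-\delta_i$ are irreducible, via Serret's criterion in Remark \ref{irr-binomial}. I would first show ${\rm ord}(\delta_i)=\ell^u$: the exponent $i\ell^{u-j}+kp^{m-s'}$ is coprime to $\ell$ since $u-j\ge 1$ while $kp^{m-s'}$ is prime to $\ell$, and $\zeta$ has order $\ell^u$. Then condition (i) of Remark \ref{irr-binomial} holds automatically ($\ell\mid\ell^u$ and $\ell\nmid(q-1)/\ell^u$ as $\ell^u\Vert q-1$), so the only obstruction is condition (ii): if $4\mid\ell^{t-j}$ then $4\mid(q-1)$ is required. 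This fails precisely when $\ell=2$, $t-j\ge 2$ and $2\Vert(q-1)$; but then $u=1$ forces $v=1$ and $j=0$, which is exactly the exceptional subcase (ii.a).

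In subcase (ii.b) the binomials are irreducible, and substituting $X\mapsto aX$ and extracting the scalar $a^{\ell^{t-j}p^s}$ from each of the $\ell^j$ factors (whose product over $i$ is $a^n$, cancelling the outer $a^{-n}$) yields exactly (\ref{lambda-lp-decomp}). In subcase (ii.a), $\zeta$ has order $2$, so $\zeta=-1$ and $\zeta^k=-1$ as $k$ is odd; hence $a^n\lambda=-1$, giving $a^{-n}=-\lambda$ and $X^n-\lambda=(-\lambda)\big((aX)^{2^t}+1\big)^{p^s}$. Applying the trinomial factorization (\ref{irr-2-decomposition}) of $X^{2^t}+1$ and then $X\mapsto aX$ produces (\ref{lambda-2p-decomp}), irreducibility being preserved since a nonzero scaling $X\mapsto aX$ preserves irreducibility. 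The main obstacle is the order computation ${\rm ord}(\delta_i)=\ell^u$ together with the bookkeeping of the scalar $a$ and of the $p^s$-th root exponent $p^{m-s'}$ across the substitution; once these are secured, the case split in Serret's criterion matches the two subcases exactly.
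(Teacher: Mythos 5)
Your proposal is correct and follows essentially the same route as the paper: the isometry classes are identified with the subgroups of the cyclic quotient $F_q^*/\langle\xi^{\ell^v}\rangle$ via Theorem \ref{n-isometry}, case (i) is delegated to Corollary \ref{thm-cyclic}, and in case (ii) the factorization of $X^{\ell^tp^s}-\zeta^{k\ell^j}$ into binomials $X^{\ell^{t-j}}-\zeta^{i\ell^{u-j}+kp^{m-s'}}$ is checked against Serret's criterion (with the order computation ${\rm ord}(\delta_i)=\ell^u$ and the observation that condition (ii) of Remark \ref{irr-binomial} fails exactly in subcase (ii.a), where Remark \ref{irr-trinomial} takes over), before transporting everything by $X\mapsto aX$. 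Your presentation via the $p^s$-th root $c=\zeta^{k\ell^jp^{m-s'}}$ is only a cosmetic repackaging of the paper's identity (\ref{l^t-decomp}); no gap.
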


\begin{proof}~ As $q-1=p^m-1$, it is clear that $\gcd(p^s, q-1)=1$.
From the notation~(\ref{u-zeta-v}), we have

$\bullet$~
$\zeta\in F_q$ is a primitive $\ell^u$-th root of unity,
$\langle\zeta\rangle$ is the Sylow $\ell$-subgroup of $F_q^*$,
and $\zeta^{\ell^{u-j}}$ for $0\le j\le u$ is a primitive $\ell^{j}$-th root of unity;

$\bullet$~ $\ell^v=\gcd(\ell^tp^s,\,q-1)$, so
${\rm ord}(\xi^{\ell^tp^s})=\frac{q-1}{\gcd(\ell^t p^s,q-1)}
=\frac{q-1}{\ell^v}={\rm ord}(\xi^{\ell^v})$, hence in the multiplicative
group $F_q^*$ we have that
\begin{equation}\label{subg}
\langle\xi^{\ell^tp^s}\rangle=\langle\xi^{\ell^t}\rangle
 =\langle\xi^{\ell^v}\rangle
\end{equation}
which is a subgroup of $F_q^*$ of order $\frac{q-1}{\ell^v}$.

Thus the quotient group $F_q^*/\langle\xi^{\ell^v}\rangle$
is a cyclic group of order $\ell^v$; and
for each positive divisor $\ell^{v-j}$ of $\ell^v$,
where $j=0,1,\cdots,v$, $\langle\zeta^{\ell^j},\xi^{\ell^v}\rangle
 /\langle\xi^{\ell^v}\rangle$
is the unique subgroup of order $\ell^{v-j}$
of the quotient group $F_q^*/\langle\xi^{\ell^v}\rangle$.

By the equivalence (i)$\Leftrightarrow$(ii) of Theorem
\ref{n-isometry}, the number of the $\ell^tp^s$-isometry classes
of $F_q^*$ is equal to $v+1$; precisely,
for any $\lambda\in F_q^*$ there is exactly one index $j$
with $0\le j\le v$ such that
$\lambda\cong_{\ell^tp^s}\zeta^{\ell^j}$.
We continue the discussion in two cases.

\smallskip{\em Case (i)}:\quad
$j=v$, i.e. $\lambda\cong_{\ell^tp^s}\zeta^{\ell^v}$;
by the equality (\ref{subg}), we see that
$\langle\lambda,\xi^{\ell^tp^s}\rangle
  =\langle \zeta^{\ell^v},\xi^{\ell^v}\rangle=\langle 1,\xi^{\ell^tp^s}\rangle$,
in other words, $\lambda\cong_{\ell^tp^s}1$.
By Corollary~\ref{thm-cyclic},
$a^{\ell^tp^s}\lambda=1^k=1$ for an $a\in F_q^*$, and
from the irreducible factorization~(\ref{lp-irr-decomp})
we get the irreducible factorization
$X^{\ell^tp^s}-\lambda=\lambda\cdot
 \prod_{i=1}^{\rho}M_{\eta^{r_i}}(aX)^{p^s}$.

\medskip{\em Case (ii)}:~ $0\le j\le v-1$.
Then by (\ref{u-zeta-v}) we have
\begin{equation}\label{case-ii}
 0\le j\le v-1<v=\min\{t,u\}\,,
\end{equation}
in particular, $v\ge 1$, i.e. $\ell\,|\,(q-1)$;
further, since $\lambda\cong_{\ell^tp^s}\zeta^{\ell^j}$,
by Theorem~\ref{n-isometry}~(iii) there is an $a\in F_q^*$
and a positive integer $k$ such that
\begin{equation}\label{a-k}
 a^{\ell^tp^s}\lambda=\zeta^{k\ell^j}\,,\qquad\gcd(k,\,\ell^tp^s)=1.
\end{equation}
We discuss it in the two subcases (ii.a) and (ii.b) as described in
the theorem.

{\rm{\em Subcase (ii.a)}}.~
Since $\ell=2$, $t\geq2$ and $2\Vert(q-1)$,
we have that $q$ is odd, $t>u=v=1$, $\zeta=-1$ and $j=0$;
and, from (\ref{a-k}) we see that $\ell=2\nmid k$ and
$a^{2^tp^s}\lambda=(-1)^k=-1$.
From the formula~(\ref{irr-2-decomposition}),
we have the following irreducible factorization in $F_q[X]$:
$$
 X^{2^tp^s}+1=\prod\limits_{h\in H_{t}}
 \big(X^{2^{t-b+1}}-2hX^{2^{t-b}}+(-1)^c\big)^{p^s};
$$
thus the following is an irreducible factorization
of $(aX)^{2^tp^s}+1$ in $F_q[X]$:
$$(aX)^{2^tp^s}+1 = \prod\limits_{h\in H_{t}}
\big(a^{2^{t-b+1}}X^{2^{t-b+1}}-2a^{2^{t-b}}hX^{2^{t-b}}+(-1)^c\big)^{p^s}.$$
However, since $a^{2^tp^s}=-\lambda^{-1}$,
we have that $(aX)^{2^tp^s}=a^{2^tp^s}X^{2^tp^s}=-\lambda^{-1}X^{2^tp^s}$;
thus we get the irreducible factorization (\ref{lambda-2p-decomp})
of $X^{2^tp^s}-\lambda$ in $F_q[X]$.

{\rm{\em Subcase (ii.b)}}.~
Remember that the conclusion in Remark \ref{irr-binomial}
is applied in this subcase.

By the choice of $s'$, $m-s'+s\equiv 0~(\bmod~m)$, so
$(p^m-1)\mid(p^{m-s'+s}-1)$, i.e.
$p^{m-s'+s}\equiv 1~(\bmod~q-1)$;
in particular, $\beta^{p^{m-s'+s}}=\beta$ for any $\beta\in F_q^*$.
Obviously, $\zeta^{\ell^{u-j}}$ is a primitive
$\ell^j$-th root of unity in~$F_q$.
Therefore,
$$\left(\frac{X^{\ell^{t-j}}}{\zeta^{kp^{m-s'}}}\right)^{\ell^j}-1
 =\prod_{i=0}^{\ell^j-1} \left(\frac{X^{\ell^{t-j}}}
 {\zeta^{kp^{m-s'}}}-\zeta^{i\ell^{u-j}}\right),
$$
hence
$$\left(\frac{X^{\ell^{t-j}}}{\zeta^{kp^{m-s'}}}\right)^{\ell^jp^s}-1
 =\left(\Big(\frac{X^{\ell^{t-j}}}{\zeta^{kp^{m-s'}}}\Big)^{\ell^j}-1\right)^{p^s}
 =\prod_{i=0}^{\ell^j-1}
 \left(\frac{X^{\ell^{t-j}}}{\zeta^{kp^{m-s'}}}-\zeta^{i\ell^{u-j}}\right)^{p^s}.
$$
Noting that $\zeta^{kp^{m-s'}p^{s}}=(\zeta^k)^{p^{m-s'+s}}=\zeta^k$,
we get that
\begin{equation}\label{l^t-decomp}
X^{\ell^tp^s}-\zeta^{k\ell^j}=\prod_{i=0}^{\ell^j-1}
 \left(X^{\ell^{t-j}}-\zeta^{i\ell^{u-j}+kp^{m-s'}}\right)^{p^s}.
\end{equation}
From (\ref{a-k}) and (\ref{case-ii}), we see that
$u>j$, $\ell\,|\,(p^m-1)$ and $\ell\nmid k$; hence
$\ell\mid i\ell^{u-j}$ but $\ell\nmid kp^{m-s'}$.
So $\ell\nmid(i\ell^{u-j}+kp^{m-s'})$,
hence, in the multiplicative group $F_q^*$ we have that
${\rm ord}(\zeta^{i\ell^{u-j}+kp^{m-s'}})=\ell^u$.
By Remark~\ref{irr-binomial}, all the polynomials
$$X^{\ell^{t-j}}-\zeta^{i\ell^{u-j}+kp^{m-s'}},\qquad
 i=0,1,\cdots,\ell^j-1,$$
are irreducible polynomials in $F_q[X]$, and (\ref{l^t-decomp})
is a irreducible factorization of
$X^{\ell^tp^s}-\zeta^{k\ell^j}$ in $F_q[X]$.

Replacing $X$ by $aX$, we get
$$
(aX)^{\ell^tp^s}-\zeta^{k\ell^j}=\prod_{i=0}^{\ell^j-1}
 \left((aX)^{\ell^{t-j}}-\zeta^{i\ell^{u-j}+kp^{m-s'}}\right)^{p^s}.
$$
But $a^{\ell^tp^s}\lambda=\zeta^{k\ell^j}$, i.e.
$a^{-\ell^tp^s}\zeta^{k\ell^j}=\lambda$.
We get the irreducible factorization of
$X^{\ell^tp^s}-\lambda$ in $F_q[X]$ as follows:
\begin{equation*}
X^{\ell^tp^s}-\lambda=a^{-\ell^tp^s}\prod_{i=0}^{\ell^j-1}
 \left((aX)^{\ell^{t-j}}-\zeta^{i\ell^{u-j}+kp^{m-s'}}\right)^{p^s}.
\end{equation*}
Finally, noting that
$a^{-\ell^tp^s}=\big((a^{-\ell^{t-j}})^{p^s}\big)^{\ell^{j}}$,
from the above we get the desired
irreducible factorization (\ref{lambda-lp-decomp})
of $X^{\ell^tp^s}-\lambda$ in $F_q[X]$.
\end{proof}

\begin{Remark}\label{generator}\rm
With the same notation as in Theorem \ref{thm4-1}, we can describe
the polynomial generator $g(X)$ of any $\lambda$-constacyclic code
$C$ of length $\ell^tp^s$ over $F_q$ for the two cases as follows.
\begin{itemize}
\item[{\bf(i)}:] $j=v$, then
$$ g(X)=\prod_{i=1}^{\rho} M_{\eta^{r_i}}(aX)^{e_i},\qquad
  0\le e_i\le p^s~~\forall~i=1,\cdots,\rho.$$
By the way, we show an easy subcase of this case:
if $j=v=t$, then $\zeta^{\ell^{u-t}}
=\xi^{\frac{q-1}{\ell^t}}\in F_q$
is a primitive $\ell^t$-th root of unity, hence
$X^{\ell^t}-1=\prod_{i=0}^{\ell^t-1}(X-\zeta^{i\ell^{u-t}})$;
thus the polynomial generator $g(X)$ looks simple:
\begin{equation}\label{v=t}
g(X)=\prod_{i=0}^{\ell^t-1}(X-a^{-1}\zeta^{i\ell^{u-t}})^{e_i},\quad~
 0\le e_i\le p^s~~\forall~i=0,\cdots,\ell^t-1.
\end{equation}
\item[{\bf(ii)}:] $0\le j<v\le t$, there are two subcases:
\begin{itemize}
\item[{\bf(ii.a)}:] if $\ell=2$, $t\geq2$ and $2\,\Vert\,(q-1)$, then
$$g(X)=\prod\limits_{h\in H_{t}}
 (a^{2^{t-b+1}}X^{2^{t-b+1}}-2a^{2^{t-b}}hX^{2^{t-b}}+(-1)^c)^{e_i}$$
with $0\le e_i\le p^s$ for $i=0,1,\cdots,2^{b-1}-1$.
\item[{\bf(ii.b)}:] otherwise,
$$g(X)=\prod_{i=0}^{\ell^j-1}\left(X^{\ell^{t-j}}
 -a^{-\ell^{t-j}}\zeta^{i\ell^{u-j}+kp^{m-s'}}\right)^{e_i}
$$
with $0\le e_i\le p^s$ for $i=0,1,\cdots,\ell^j-1$.
\end{itemize}\end{itemize}\end{Remark}

It is a special case for Theorem \ref{thm4-1} that
$t=v=1$, i.e. $\ell\,|\,(q-1)$ and $t=1$;
at that case, as stated in the following corollary,
there are only two $\ell p^s$-isometry classes
in $F_q^*$, and any constacyclic code of
length $\ell p^s$ over $F_q$ has a polynomial generator
with all irreducible factors being binomials.

\begin{Corollary}
Assume that $\ell$ is a prime such that
$\ell^u\Vert(q-1)$ with $u\ge 1$, $\zeta\in F_q$ is a primitive
$\ell^u$-th root of unity, and $\lambda\in F_q^*$.
Let $C$ be a $\lambda$-constacyclic code
of length $\ell p^s$ over $F_q$. Then

$\bullet$~ either $\lambda\in\langle \xi^\ell\rangle$,
$a^{\ell p^s}\lambda=1$ for an $a\in F_q$, and we have
$$ C=\left\langle \prod_{i=0}^{\ell-1}
 \big(X-a^{-1}\zeta^{i\ell^{u-1}}\big)^{e_i}\right\rangle,
 \qquad 0\le e_i\le p^s,~~\forall~i=0,1,\cdots,\ell-1;
$$

$\bullet$~ or $\lambda\notin\langle \xi^\ell\rangle$,
$a^{\ell p^s}\lambda=\zeta^k$ for an $a\in F_q^*$
and an integer $k$ coprime to $\ell p^s$, and,
taking $s'$ such that $0\le s'<m$ and $s'\equiv s~(\bmod~m)$,
we have
$$ C=\left\langle
\big(X^{\ell}-a^{-\ell}\zeta^{kp^{m-s'}}\big)^{e}\right\rangle,
 \qquad 0\le e\le p^s.
$$
\end{Corollary}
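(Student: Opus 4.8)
The plan is to read the Corollary directly off Theorem~\ref{thm4-1} and Remark~\ref{generator} by specializing to the case $t=1$. First I would record the arithmetic setup: since $t=1$ and $u\ge 1$, we have $v=\min\{t,u\}=1$, so by the count of isometry classes in Theorem~\ref{thm4-1} (equal to $v+1=2$) every $\lambda\in F_q^*$ satisfies $\lambda\cong_{\ell p^s}\zeta^{\ell^j}$ for exactly one $j\in\{0,1\}$. The two bullets of the Corollary will correspond to $j=v=1$ and $j=0$ respectively, and the whole argument amounts to checking that the generic formulas of Theorem~\ref{thm4-1} collapse to the stated shapes when $t=1$.

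For the first bullet ($j=v=1$): this is Case~(i) of Theorem~\ref{thm4-1}, where $\lambda\cong_{\ell p^s}1$. Using the subgroup identity $\langle\xi^{\ell p^s}\rangle=\langle\xi^\ell\rangle$ from (\ref{subg}) together with Corollary~\ref{thm-cyclic}, the condition $\lambda\cong_{\ell p^s}1$ is equivalent to $\lambda\in\langle\xi^\ell\rangle$ and yields an $a\in F_q^*$ with $a^{\ell p^s}\lambda=1$. For the generator I would invoke the ``easy subcase'' $j=v=t$ highlighted in Remark~\ref{generator}(i): since here $j=v=t=1$, the element $\zeta^{\ell^{u-1}}=\xi^{(q-1)/\ell}$ is a primitive $\ell$-th root of unity, so $X^\ell-1=\prod_{i=0}^{\ell-1}(X-\zeta^{i\ell^{u-1}})$, and formula (\ref{v=t}) specializes verbatim to the first displayed generator, with exponents $0\le e_i\le p^s$.

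For the second bullet ($j=0$): here $0\le j<v=1$, so we are in Case~(ii) of Theorem~\ref{thm4-1} and $\lambda\not\cong_{\ell p^s}1$, whence $\lambda\notin\langle\xi^\ell\rangle$. The key observation, and the only point requiring any care, is that subcase (ii.a) is excluded: it requires $t\ge 2$, whereas $t=1$ here; consequently only subcase (ii.b) can occur. Then (\ref{a-k}) gives $a^{\ell p^s}\lambda=\zeta^{k\ell^0}=\zeta^k$ with $\gcd(k,\ell p^s)=1$, and in the factorization (\ref{lambda-lp-decomp}) the product over $i\in\{0,\dots,\ell^j-1\}$ has the single index $i=0$, so it reduces to $X^{\ell p^s}-\lambda=(X^\ell-a^{-\ell}\zeta^{kp^{m-s'}})^{p^s}$. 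Reading the corresponding generator from Remark~\ref{generator}(ii.b) then gives exactly $(X^\ell-a^{-\ell}\zeta^{kp^{m-s'}})^e$ with $0\le e\le p^s$, as claimed.

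There is no serious obstacle; the content is entirely a matter of confirming that $v=1$, that subcase (ii.a) cannot arise when $t=1$, and that the degenerate index ranges (a product over a single factor when $j=0$, and a primitive $\ell$-th root when $j=t=1$) specialize correctly. The most error-prone step is the index bookkeeping in the exponents $\ell^{t-j}$, $\ell^{u-j}$ and $kp^{m-s'}$, which I would verify by substituting $t=1$ and $j\in\{0,1\}$ explicitly.
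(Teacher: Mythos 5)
Your proposal is correct and follows the same route as the paper, which simply cites Remark~\ref{generator} (the paper's proof reads ``It follows from Remark \ref{generator} immediately,'' adding only the observation that $\zeta^{\ell^{u-1}}$ is a primitive $\ell$-th root of unity and $\zeta^{kp^{m-s'}}$ a primitive $\ell^u$-th root of unity). Your additional checks --- that $v=1$, that subcase (ii.a) is excluded because it requires $t\ge 2$, and that the index ranges collapse correctly --- are exactly the verifications the paper leaves implicit.
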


\begin{proof}
It follows from Remark \ref{generator} immediately.
We just remark that $\zeta^{\ell^{u-1}}$ is a primitive
$\ell$-th root of unity, while $\zeta^{kp^{m-s'}}$
is a primitive $\ell^u$-th root of unity.
\end{proof}

More specifically, if $\ell=2$ in the above corollary,
we reobtain the main result of \cite{Dinh11},
as stated below in our notation.

\begin{Corollary}
Assume that $2^u\Vert(q-1)$ with $u\ge 1$,
$\zeta\in F_q$ is a primitive $2^u$-th root of unity,
and $\lambda\in F_q^*$. Let $C$ be a $\lambda$-constacyclic
code of length $2 p^s$ over~$F_q$. Then

$\bullet$~ either $\lambda\in\langle \xi^2\rangle$,
$a^{2p^s}\lambda=1$ for an $a\in F_q$, and we have
$$ C=\left\langle \big(X-a^{-1}\big)^{e_0}
 \big(X+a^{-1}\big)^{e_1}\right\rangle,
 \qquad 0\le e_i\le p^s,~~\forall~i=0,1;
$$

$\bullet$~ or $\lambda\notin\langle \xi^2\rangle$,
$a^{2p^s}\lambda=\zeta^k$ for an $a\in F_q^*$
and an integer $k$ coprime to $2p^s$, and, taking
an integer $s'$ such that $0\le s'<m$ and $s'\equiv s~(\bmod~m)$,
we have
$$ C=\left\langle
\big(X^{2}-a^{-2}\zeta^{kp^{m-s'}}\big)^{e}\right\rangle,
 \qquad 0\le e\le p^s.
$$
\end{Corollary}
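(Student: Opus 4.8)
The plan is to specialize the preceding Corollary (the general-$\ell$ version) to the case $\ell=2$. With $\ell=2$ the hypothesis $\ell^u\Vert(q-1)$ becomes $2^u\Vert(q-1)$, and $\zeta$ is a primitive $2^u$-th root of unity in $F_q$; so the two alternatives of the preceding Corollary apply verbatim to the $\lambda$-constacyclic code $C$ of length $2p^s$, and it remains only to rewrite the two generator formulas in the simplified shape asserted here.

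First I would treat the alternative $\lambda\in\langle\xi^2\rangle$. Here the preceding Corollary produces $a\in F_q^*$ with $a^{2p^s}\lambda=1$ and the polynomial generator $\prod_{i=0}^{1}(X-a^{-1}\zeta^{i2^{u-1}})^{e_i}$ with $0\le e_i\le p^s$. The only point needing justification is the identity $\zeta^{2^{u-1}}=-1$. Since $2^u\Vert(q-1)$ with $u\ge1$ forces $q$ to be odd, we have $-1\ne1$ and $-1$ is the unique element of order $2$ in $F_q^*$; and $\zeta^{2^{u-1}}$ has order $2^u/\gcd(2^{u-1},2^u)=2$, whence $\zeta^{2^{u-1}}=-1$. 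Substituting $i=0$ and $i=1$ then gives the factors $X-a^{-1}$ and $X-a^{-1}(-1)=X+a^{-1}$, so that $C=\langle(X-a^{-1})^{e_0}(X+a^{-1})^{e_1}\rangle$, exactly as claimed.

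Next I would treat the alternative $\lambda\notin\langle\xi^2\rangle$. In this case the preceding Corollary directly yields $a\in F_q^*$ and an integer $k$ coprime to $2p^s$ with $a^{2p^s}\lambda=\zeta^k$, and, after choosing $s'$ with $0\le s'<m$ and $s'\equiv s\pmod m$, the polynomial generator $(X^2-a^{-2}\zeta^{kp^{m-s'}})^e$ with $0\le e\le p^s$. This is already literally the asserted form for $\ell=2$, so nothing further is required beyond recording (as in the preceding proof) that $\zeta^{kp^{m-s'}}$ is again a primitive $2^u$-th root of unity, so that the binomial is genuinely irreducible.

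I do not expect any genuine obstacle: the statement is a direct transcription of the preceding Corollary at $\ell=2$. The single computational point, and the only place where the special value $\ell=2$ actually intervenes, is the reduction $\zeta^{2^{u-1}}=-1$ in the first alternative; this is the step I would single out and verify carefully, since it is what turns the abstract root-of-unity factor $\zeta^{i2^{u-1}}$ into the concrete signs appearing in the binomials $X\mp a^{-1}$.
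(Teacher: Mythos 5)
Your proposal is correct and matches the paper's proof exactly: the paper likewise derives this statement by specializing the preceding Corollary to $\ell=2$ and noting that $\zeta^{2^{u-1}}$, being the element of order $2$ in $F_q^*$, equals $-1$.
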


\begin{proof}
Just note that $\zeta^{2^{u-1}}$ is a primitive square root,
i.e. $\zeta^{2^{u-1}}=-1$.
\end{proof}

\section{Examples}

By Theorem~\ref{thm4-1},
the polynomial generators of all constacyclic codes
of length $\ell^tp^s$ over the finite field $F_{p^m}$
are easy to be established, where $\ell,p$
are different primes and $s,t$ are non-negative integers.
In this section, some examples are given to illustrate the result.

\begin{Example}
 Consider all constacyclic codes of length $6=3\cdot2$ over $F_{2^4}$. Here, $\ell=3,\,\,t=1,\,\,p=2$ and $s=1$. Let $\xi$ be a primitive $15$th root of unity in $F_{2^4}$. Since $3\,|\,(2^4-1)$, it follows that there exists primitive $3$rd root of unity in $F_{2^4}$. Therefore, $X^3-1=(X-1)(X-\xi^5)(X-\xi^{10})$. By Theorem~\ref{thm4-1}, the number of the $6$-isometry classes of $F_{2^4}^*$ is $2$.  Hence, all the constacyclic codes are divided into two parts. The polynomial generators of all  constacyclic codes are given in Table 1 and Table 2.
\end{Example}

\begin{table}[!h]\begin{center}
\begin{tabular}{|l|l|c|c|}\hline
$\lambda$& $a$ &
 $\lambda$-constacyclic codes:~
 $0\leq j_0,j_1,j_2\leq2$  & sizes\\ \hline
 1  & $1$ &
 $\langle(X-1)^{j_0}(X-\xi^5)^{j_1}(X-\xi^{10})^{j_2}\rangle$&
 $16^{6-j_o-j_1-j_2}$\\
$\xi^3$ & $\xi^7$ &
$\langle(\xi^7X-1)^{j_0}(\xi^7X-\xi^5)^{j_1}(\xi^7X-\xi^{10})^{j_2}\rangle$ & $16^{6-j_o-j_1-j_2}$\\
$\xi^6$ & $\xi^4$ &
$\langle(\xi^4 X-1)^{j_0}(\xi^4 X-\xi^5)^{j_1}(\xi^4 X-\xi^{10})^{j_2}\rangle$ & $16^{6-j_o-j_1-j_2}$\\
$\xi^9$   & $\xi$ &
$\langle(\xi X-1)^{j_0}(\xi X-\xi^5)^{j_1}(\xi X-\xi^{10})^{j_2}\rangle$ & $16^{6-j_o-j_1-j_2}$\\
$\xi^{12}$ & $\xi^3$ &
$\langle(\xi^3X-1)^{j_0}(\xi^3X-\xi^5)^{j_1}(\xi^3X-\xi^{10})^{j_2}\rangle$ & $16^{6-j_o-j_1-j_2}$\\
\hline
\end{tabular}
\caption{$\lambda$-constacyclic codes of length $6$ over $F_{2^4}$, $\lambda\cong_61$,  $ a^6\lambda=1$ }
\end{center}\end{table}

\smallskip
\begin{table}[!h]\begin{center}
\begin{tabular}{|l|l|l|c|c|}\hline
$\lambda$ & $k$  & $a$ &
$\lambda$-constacyclic codes:~ $0\leq j\leq2$ & sizes \\ \hline
$\xi$       & $5$  & $\xi^4$ & $\langle(X^3-\xi^8)^j\rangle$           & $16^{6-3j}$\\
$\xi^{4}$   & $5$  & $\xi^6$ & $\langle(X^3-\xi^2)^j\rangle$     & $16^{6-3j}$\\
$\xi^{7}$   & $5$  & $\xi^8$ & $\langle(X^3-\xi^{11})^j\rangle$  & $16^{6-3j}$\\
$\xi^{10}$  & $5$  & $\xi^5$ & $\langle(X^3-\xi^5)^j\rangle$       & $16^{6-3j}$\\
$\xi^{13}$  & $5$  & $\xi^2$ & $\langle(X^3-\xi^{14})^j\rangle$  & $16^{6-3j}$\\
$\xi^2$     & $1$  & $\xi^3$ & $\langle(X^3-\xi)^j\rangle$       & $16^{6-3j}$\\
$\xi^{5}$   & $1$  & $1$ & $\langle(X^3-\xi^{10})^j\rangle$      & $16^{6-3j}$\\
$\xi^{8}$   & $1$  & $\xi^2$ & $\langle(X^3-\xi^{4})^j\rangle$   & $16^{6-3j}$\\
$\xi^{11}$  & $1$  & $\xi^4$ & $\langle(X^3-\xi^{13})^j\rangle$  & $16^{6-3j}$\\
$\xi^{14}$  & $1$  & $\xi$ & $\langle(X^3-\xi^{7})^j\rangle$     & $16^{6-3j}$\\
\hline
\end{tabular}
\caption{$\lambda$-constacyclic codes of length $6$
over $F_{2^4}$, $\lambda\cong_6\xi^5$,  $ a^6\lambda=\xi^{5k}$ }
\end{center}\end{table}

\begin{table}[!h]\begin{center}
\begin{tabular}{|l|l|c|c|}\hline
$\lambda$& $a$ &
 $\lambda$-constacyclic codes:~
 $0\leq i,j,k\leq25$  & sizes\\ \hline
 $1$         & $1$        & $\langle(X-1)^i g(X)^jh(X)^k\rangle$                         & $25^{175-i-3j-3k}$\\
 $\xi$       & $\xi^{17}$ & $\langle(\xi^{17}X-1)^i g(\xi^{17}X)^jh(\xi^{17}X)^k\rangle$ & $25^{175-i-3j-3k}$\\
 $\xi^2$     & $\xi^{10}$ & $\langle(\xi^{10}X-1)^i g(\xi^{10}X)^jh(\xi^{10}X)^k\rangle$ & $25^{175-i-3j-3k}$\\
 $\xi^3$     & $\xi^3$    & $\langle(\xi^3X-1)^i g(\xi^3X)^jh(\xi^3X)^k\rangle$          & $25^{175-i-3j-3k}$\\
 $\xi^4$     & $\xi^{20}$ & $\langle(\xi^{20}X-1)^i g(\xi^{20}X)^jh(\xi^{20}X)^k\rangle$ & $25^{175-i-3j-3k}$\\
 $\xi^5$     & $\xi^{13}$ & $\langle(\xi^{13}X-1)^i g(\xi^{13}X)^jh(\xi^{13}X)^k\rangle$ & $25^{175-i-3j-3k}$\\
 $\xi^6$     & $\xi^6$    & $\langle(\xi^6X-1)^i g(\xi^6X)^jh(\xi^6X)^k\rangle$          & $25^{175-i-3j-3k}$\\
 $\xi^7$     & $\xi^{23}$ & $\langle(\xi^{23}X-1)^i g(\xi^{23}X)^jh(\xi^{23}X)^k\rangle$ & $25^{175-i-3j-3k}$\\
 $\xi^8$     & $\xi^{16}$ & $\langle(\xi^{16}X-1)^i g(\xi^{16}X)^jh(\xi^{16}X)^k\rangle$ & $25^{175-i-3j-3k}$\\
 $\xi^9$     & $\xi^9$    & $\langle(\xi^9X-1)^i g(\xi^9X)^jh(\xi^9X)^k\rangle$          & $25^{175-i-3j-3k}$\\
 $\xi^{10}$  & $\xi^2$    & $\langle(\xi^2X-1)^i g(\xi^2X)^jh(\xi^2X)^k\rangle$          & $25^{175-i-3j-3k}$\\
 $\xi^{11}$  & $\xi^{19}$ & $\langle(\xi^{19}X-1)^i g(\xi^{19}X)^jh(\xi^{19}X)^k\rangle$ & $25^{175-i-3j-3k}$\\
 $\xi^{12}$  & $\xi^{12}$ & $\langle(\xi^{12}X-1)^i g(\xi^{12}X)^jh(\xi^{12}X)^k\rangle$ & $25^{175-i-3j-3k}$\\
 $\xi^{13}$  & $\xi^5$    & $\langle(\xi^5X-1)^i g(\xi^5X)^jh(\xi^5X)^k\rangle$          & $25^{175-i-3j-3k}$\\
 $\xi^{14}$  & $\xi^{22}$ & $\langle(\xi^{22}X-1)^i g(\xi^{22}X)^jh(\xi^{22}X)^k\rangle$ & $25^{175-i-3j-3k}$\\
 $\xi^{15}$  & $\xi^{15}$ & $\langle(\xi^{15}X-1)^i g(\xi^{15}X)^jh(\xi^{15}X)^k\rangle$ & $25^{175-i-3j-3k}$\\
 $\xi^{16}$  & $\xi^8$    & $\langle(\xi^8X-1)^i g(\xi^8X)^jh(\xi^8X)^k\rangle$          & $25^{175-i-3j-3k}$\\
 $\xi^{17}$  & $\xi$      & $\langle(\xi X-1)^i g(\xi X)^jh(\xi X)^k\rangle$             & $25^{175-i-3j-3k}$\\
 $\xi^{18}$  & $\xi^{18}$ & $\langle(\xi^{18}X-1)^i g(\xi^{18}X)^jh(\xi^{18}X)^k\rangle$ & $25^{175-i-3j-3k}$\\
 $\xi^{19}$  & $\xi^{11}$ & $\langle(\xi^{11}X-1)^i g(\xi^{11}X)^jh(\xi^{11}X)^k\rangle$ & $25^{175-i-3j-3k}$\\
 $\xi^{20}$  & $\xi^4$    & $\langle(\xi^4X-1)^i g(\xi^4X)^jh(\xi^4X)^k\rangle$          & $25^{175-i-3j-3k}$\\
 $\xi^{21}$  & $\xi^{21}$ & $\langle(\xi^{21}X-1)^i g(\xi^{21}X)^jh(\xi^{21}X)^k\rangle$ & $25^{175-i-3j-3k}$\\
 $\xi^{22}$  & $\xi^{14}$ & $\langle(\xi^{14}X-1)^i g(\xi^{14}X)^jh(\xi^{14}X)^k\rangle$ & $25^{175-i-3j-3k}$\\
 $\xi^{23}$  & $\xi^7$    & $\langle(\xi^7X-1)^i g(\xi^7X)^jh(\xi^7X)^k\rangle$          & $25^{175-i-3j-3k}$\\
\hline
\end{tabular}
\caption{$\lambda$-constacyclic codes of length $175$ over $F_{5^2}$, $\lambda\cong_{175}1$,  $ a^{175}\lambda=1$ }
\end{center}\end{table}

\begin{Example}
Consider all constacyclic codes of length $175=7\cdot5^2$ over $F_{5^2}$.
Here, $\ell=7,\,\,t=1,\,\,p=5$ and $s=2$. Let $\xi$ be a primitive $24$th root of unity in $F_{5^2}$.
Since $\gcd(175, 5^2-1)=1$, by Corollary~\ref{cor3-1}, all the constacyclic codes of length~$175$
are isometric to the cyclic codes of length~$175$.
By \cite{GAP}, it follows that
 $X^7-1=(X-1)(X^3+\xi X^2+\xi^{17}X-1)(x^3+\xi^5X^2+\xi^{13}X-1 )$ is the factorization of $X^7-1$
into irreducible factors over $F_{5^2}$.
Let $g(X)=X^3+\xi X^2+\xi^{17}X-1$ and $h(X)=x^3+\xi^5X^2+\xi^{13}X-1$.
The polynomial generators of constacyclic codes are given in Table 3.
\end{Example}

\begin{Example}
 Consider all constacyclic codes of length $20=2^2\cdot5$ over $F_{5^2}$. Here, $\ell=2,\,\,t=2,\,\,p=5$ and $s=1$. Let $\xi$ be a primitive $24$th root of unity in $F_{5^2}$.  Since $4\,|\,(5^2-1)$, it follows that there exists a primitive $4$th root of identity in $F_{5^2}$. Therefore, $X^4-1=(X-1)(X-\xi^6)(X-\xi^{12})(X-\xi^{18})$. By Theorem~\ref{thm4-1}, the number of the $20$-isometry classes of $F_{2^4}^*$ is $3$. The polynomial generators of constacyclic codes are given in Table 4-6.
\end{Example}

\begin{table}[h]\begin{center}
{\footnotesize
\begin{tabular}{|l|l|c|c|}\hline
$\lambda$& $a$ &
 $\lambda$-constacyclic codes:~
 $0\leq j_0,j_1,j_2,j_3\leq5$  & sizes\\ \hline
  1         & $\xi^6$ & $\langle(\xi^6X-1)^{j_0}(\xi^6X-\xi^6)^{j_1}(\xi^6X-\xi^{12})^{j_2}(\xi^6X-\xi^{18})^{j_3}\rangle$ & $25^{20-j_o-j_1-j_2-j_3}$\\
  $\xi^4$   & $\xi$   & $\langle(\xi X-1)^{j_0}(\xi X-\xi^6)^{j_1}(\xi X-\xi^{12})^{j_2}(\xi X-\xi^{18})^{j_3}\rangle$ & $25^{20-j_o-j_1-j_2-j_3}$\\
 $\xi^8$    & $\xi^2$ & $\langle(\xi^2X-1)^{j_0}(\xi^2X-\xi^6)^{j_1}(\xi^2X-\xi^{12})^{j_2}(\xi^2X-\xi^{18})^{j_3}\rangle$ & $25^{20-j_o-j_1-j_2-j_3}$\\
 $\xi^{12}$ & $\xi^3$ & $\langle(\xi^3X-1)^{j_0}(\xi^3X-\xi^6)^{j_1}(\xi^3X-\xi^{12})^{j_2}(\xi^3X-\xi^{18})^{j_3}\rangle$ & $25^{20-j_o-j_1-j_2-j_3}$\\
 $\xi^{16}$ & $\xi^4$ & $\langle(\xi^4X-1)^{j_0}(\xi^4X-\xi^6)^{j_1}(\xi^4X-\xi^{12})^{j_2}(\xi^4X-\xi^{18})^{j_3}\rangle$ & $25^{20-j_o-j_1-j_2-j_3}$\\
 $\xi^{20}$ & $\xi^5$ & $\langle(\xi^5X-1)^{j_0}(\xi^5X-\xi^6)^{j_1}(\xi^5X-\xi^{12})^{j_2}(\xi^5X-\xi^{18})^{j_3}\rangle$ & $25^{20-j_o-j_1-j_2-j_3}$\\
\hline
\end{tabular}}
\caption{$\lambda$-constacyclic codes of length $20$ over $F_{5^2}$, $\lambda\cong_{20}1$,  $ a^{20}\lambda=1$ }

\end{center}\end{table}

\smallskip
\begin{table}[h]\begin{center}
\begin{tabular}{|l|l|l|c|c|}\hline
$\lambda$ & $k$  & $a$ &
$\lambda$-constacyclic codes:~ $0\leq j\leq5$ & sizes \\ \hline
 $\xi$      &  $3$   & $\xi^4$    & $\langle(X^4-\xi^5)^j\rangle$    & $25^{20-4j}$\\
 $\xi^5$    &  $3$   & $\xi^{23}$        & $\langle(X^4-\xi^5)^j\rangle$    & $25^{20-4j}$\\
 $\xi^9$    &  $3$   & $\xi^6$      & $\langle(X^4-\xi^{21})^j\rangle$  & $25^{20-4j}$\\
 $\xi^{13}$ &  $3$   & $\xi$    & $\langle(X^4-\xi^{14})^j\rangle$ & $25^{20-4j}$\\
 $\xi^{17}$ &  $3$   & $\xi^2$    & $\langle(X^4-\xi^{13})^j\rangle$ & $25^{20-4j}$\\
 $\xi^{21}$ &  $3$   & $\xi^3$    & $\langle(X^4-\xi^9)^j\rangle$    & $25^{20-4j}$\\
 $\xi^3$    &  $1$   & $1$    & $\langle(X^4-\xi^{15})^j\rangle$ & $25^{20-4j}$\\
 $\xi^7$    &  $1$   & $\xi$    & $\langle(X^4-\xi^{11})^j\rangle$ & $25^{20-4j}$\\
 $\xi^{11}$ &  $1$   & $\xi^2$ & $\langle(X^4-\xi^{7})^j\rangle$  & $25^{20-4j}$\\
 $\xi^{15}$ &  $1$   & $\xi^3$        & $\langle( X^4-\xi^{3})^j\rangle$ & $25^{20-4j}$\\
 $\xi^{19}$ &  $1$   & $\xi^4$      & $\langle(X^4-\xi^{23})^j\rangle$ & $25^{20-4j}$\\
 $\xi^{23}$ &  $1$   & $\xi^5$    & $\langle(X^4-\xi^{19})^j\rangle$ & $25^{20-4j}$\\
\hline
\end{tabular}
\caption{$\lambda$-constacyclic codes of length $20$ over $F_{5^2}$, $\lambda\cong_{20}\xi^3$,
$a^{20}\lambda=\xi^{3k}$}
\end{center}\end{table}

\smallskip
\begin{table}[!h]\begin{center}
\begin{tabular}{|l|l|l|c|c|}\hline
$\lambda$ & $k$  & $a$ &
$\lambda$-constacyclic codes:~ $0\leq j_0,j_1\leq5$ & sizes \\ \hline
 $\xi^2$    & $1$  & $\xi^{23}$    &  $\langle(X^2-\xi^{5})^{j_0}(X^2+\xi^{5})^{j_1}\rangle$ &  $25^{20-2j_o-2j_1}$\\
 $\xi^6$    & $1$  & $\xi^{6}$ &  $\langle(X^2-\xi^{15})^{j_0}(X^2+\xi^{15})^{j_1}\rangle$       &  $25^{20-2j_o-2j_1}$\\
 $\xi^{10}$ & $1$  & $\xi$        &  $\langle(X^2-\xi)^{j_0}(X^2+\xi)^{j_1}\rangle$           &  $25^{20-2j_o-2j_1}$\\
 $\xi^{14}$ & $1$  & $\xi^2$      &  $\langle(X^2-\xi^{23})^{j_0}(X^2+\xi^{23})^{j_1}\rangle$ &  $25^{20-2j_o-2j_1}$\\
 $\xi^{18}$ & $1$  & $\xi^3$    &  $\langle(X^2-\xi^{18})^{j_0}(X^2+\xi^{18})^{j_1}\rangle$       & $25^{20-2j_o-2j_1}$\\
 $\xi^{22}$ & $1$  & $\xi^4$    &  $\langle(X^2-\xi^{11})^{j_0}(X^2+\xi^{11})^{j_1}\rangle$       & $25^{20-2j_o-2j_1}$\\
\hline
\end{tabular}
\caption{$\lambda$-constacyclic codes of length $20$ over $F_{5^2}$, $\lambda\cong_{20}\xi^6$,  $ a^{20}\lambda=\xi^{6k}$}
\end{center}\end{table}

\newpage

\noindent{\bf Acknowledgements}

This work was supported by NSFC, Grant No.~11171370, and
Research Funds of CCNU, Grant No.~11A02014.
The authors would like to thank the anonymous referees
for their many helpful comments.


\begin{thebibliography}{21}
\bibitem{Abu}
T. Abualrub,  R. Oehmke, On the generators of ${\bf Z}_4$ cyclic codes of length $2^e$, IEEE Trans. Inform. Theory, {\bf 49}(9)(2003), 2126-2133.
\bibitem{Bak}
G. K. Bakshi, M. Raka, A class of constacyclic codes over a finite field, Finite Fields Appl., {\bf 18}(2012), 362-377.
\bibitem{Ber}
E. R. Berlekamp, Algebraic Coding Theory, McGraw-Hill Book Company, New York, 1968.
\bibitem{Bla}
T. Blackford, Negacyclic codes over ${\bf Z}_4$ of even length, IEEE Trans. Inform. Theory, {\bf 49}(6)(2003), 1417-1424.
\bibitem{Blake}
I. F. Blake, S. Gao, R. C. Mullin, Explicit factorization of $X^{2^k}+1$ over $F_p$ with prime $p\equiv3 ~(\bmod~4)$, Appl. Algebra Engrg. Comm. Comput., {\bf 4}(1993), 89-94.
\bibitem{Casta}
G. Castagnoli, J. L. Massey, P. A. Schoeller, N. von Seemann, On repeated-root cyclic codes, IEEE Trans. Inform. Theory, {\bf 37}(8)(1991), 337-342.
\bibitem{Dinh04}
H. Q. Dinh, S. R. L\'{o}pez-Permouth,  Cyclic and negacyclic codes over finite chain rings,
IEEE Trans. Inform. Theory, {\bf 50}(8)(2004), 1728-1744.
\bibitem{Dinh05}
H. Q. Dinh, Negacyclic codes of length $2^s$ over Galois rings, IEEE
Trans. Inform. Theory, {\bf 51}(12)(2005), 4252-4262.
\bibitem{Dinh08}
H. Q. Dinh, On the linear ordering of some classes of negacyclic and cyclic codes and their distance distributions, Finite Fields Appl., {\bf 14}(2008), 22-40.
\bibitem{Dinh10}
H. Q. Dinh, Constacylic codes of length $p^s$ over
$\mathbb{F}_{p^m}+u\mathbb{F}_{p^m}$, Journal of Algebra, {\bf 324}
(2010), 940-950.
\bibitem{Dinh11}
H. Q. Dinh, Repeated-root constacyclic codes of length $2p^s,$
Finite Fields Appl., {\bf 18}(2012), 133-143.
\bibitem{Dou}
S. T. Dougherty, S. Ling, Cyclic codes over ${\bf Z}_4$ of even length, Designs Codes Cryptogr., {\bf 39}(2), (2006), 127-153.

\bibitem{GAP}
[GAP] The GAP Group, GAP --- Groups, Algorithms, and Programming,
Version 4.4.12; (http://www.gap-system.org), 2008.

\bibitem{Hug}
G. Hughes, Constacyclic codes, cocycles and a $u+v|u-v$ construction, IEEE Trans. Inform. Theory, {\bf 46}(2)(2000), 674-680.



\bibitem{huff}
W. C. Huffman, V. Pless, Fundamentals of Error-Correcting Codes, Cambridge University Press, Cambridge, 2003.




\bibitem{li}
R. Lidl, H. Niederreiter, Finite Fields, Cambridge University Press, Cambridge, 2008.



\bibitem{Sal}
A. S\u{a}l\u{a}gean, Repeated-root cyclic and negacyclic codes over
a finite chain ring,  Discrete Math. Appl., {\bf 154}(2006),
413-419.
\bibitem{van}
J. H. van Lint, Repeated-root cyclic codes, IEEE Trans. Inform. Theory, {\bf 37}(2)(1991), 343-345.
\bibitem{Wan}
Z. X. Wan, Lectures on Finite Fields and Galois Rings, World Scientific Publishing, 2003.
\bibitem{Wo}
J. Wolfmann, Negacyclic and cyclic codes over $Z_4$, IEEE Trans.
Inform. Theory, {\bf 45}(7)(1999), 2527-2532.
\bibitem{Zhu}
S. Zhu, X. Kai, A class of constacyclic codes over ${\bf Z}_{p^m}$,
Finite Fields Appl., {\bf 16}(2010), 243-254.


\end{thebibliography}
\end{document}